\numberwithin{equation}{section}
\numberwithin{figure}{section}
\theoremstyle{plain}
\newtheorem*{thm*}{Theorem}
\newtheorem{thm}{Theorem}[section]
\newtheorem{prop}[thm]{Proposition}
\newtheorem{conj}[thm]{Conjecture}
\newtheorem{cor}[thm]{Corollary}
\theoremstyle{definition}
\newtheorem{defn}[thm]{Definition}
\newtheorem*{defn*}{Definition}
\newtheorem{exmp}[thm]{Example}
\newtheorem{rem}[thm]{Remark}
\crefname{lemma}{lemma}{lemmas}
\Crefname{lemma}{Lemma}{Lemmas}
\crefname{thm}{theorem}{theorems}
\Crefname{thm}{Theorem}{Theorems}
\crefname{defn}{definition}{definitions}
\Crefname{defn}{Definition}{Definitions}
\DeclarePairedDelimiterX{\abs}[1]{\lvert}{\rvert}{\ifblank{#1}{{}\cdot{}}{#1}}
\newtheorem*{thm:main}{Theorem \ref{thm:main}}
\newtheorem*{thm:prop}{Proposition \ref{thm:prop}}
\begin{document}

\begin{titlepage}
\vspace*{-3cm} 
\begin{center}
\vspace{2.2cm}
{\LARGE\bfseries Wormhole Renormalization:}\\\vspace{0.2cm}
 {\Large\bfseries The gravitational path integral, holography, and \\a gauge group for topology change}\\
\vspace{1cm}
{\large
Elliott Gesteau,$^{1}$ Matilde Marcolli,$^{1}$ and Jacob McNamara$^{2}$\\}
\vspace{.6cm}
{ $^1$ Division of Physics, Mathematics, and Astronomy, California Institute of Technology}\par\vspace{-.3cm}
{Pasadena, CA 91125, U.S.A.}\par
\vspace{.2cm}
{ $^2$ Walter Burke Institute for Theoretical Physics, California Institute of Technology}\par\vspace{-.3cm}
{Pasadena, CA 91125, U.S.A.}\par
\vspace{.4cm}

\scalebox{.95}{\tt  egesteau@caltech.edu, matilde@caltech.edu, jmcnamar@caltech.edu}\par
\vspace{1cm}
{\bf{Abstract}}\\
\end{center}
We study the Factorization Paradox from the bottom up by adapting methods from perturbative renormalization. Just as quantum field theories are plagued with loop divergences that need to be cancelled systematically by introducing counterterms, gravitational path integrals are plagued by wormhole contributions that spoil the factorization of the holographic dual. These wormholes must be cancelled by some stringy effects in a UV complete, holographic theory of quantum gravity. In a simple model of two-dimensional topological gravity, we outline a gravitational analog of the recursive BPHZ procedure in order to systematically introduce ``counter-wormholes" which parametrize the unknown stringy effects that lead to factorization. Underlying this procedure is a Hopf algebra of symmetries which is analogous to the Connes--Kreimer Hopf algebra underlying perturbative renormalization. The group dual to this Hopf algebra acts to reorganize contributions from spacetimes with distinct topology, and can be seen as a gauge group relating various equivalent ways of constructing a factorizing gravitational path integral.\\
\vfill 
\end{titlepage}

\tableofcontents
\newpage
\section{Introduction}

Should the path integral of quantum gravity include a sum over topologies? There are good reasons to think it must: quantum mechanics instructs us to sum the amplitudes for all allowed processes, and gravity is supposed to be a theory of dynamical spacetime manifolds of arbitrary topology. However, it has been known for quite some time that the sum over topologies leads to deep structural issues \cite{StHW, tHoo, Col, GS1, GS2}. These issues have been sharpened into the Factorization Paradox \cite{Witten:1999xp, Maldacena:2004rf}, a direct conflict between the holographic principle and a gravitational path integral over topologies. The holographic principle asserts that quantum gravity defines a local quantum field theory (QFT) living on the boundary of spacetime.\footnote{Strictly speaking, this is likely only exactly true in the context of AdS/CFT. While some much more general form of holography is expected to be true, the details are still far from clear.} However, the space of configurations in the gravitational path integral, given by the set of bulk spacetimes with fixed boundary, is not local to the boundary.\footnote{Mathematically, the Factorization Paradox is the fact that the configuration space in the gravitational path integral, given by the set of spacetime manifolds, does not define a sheaf over the space of boundaries.} In particular, while the holographic principle requires that correlation functions defined with a disconnected boundary must factorize, there are non-factorizing configurations in the gravitational path integral, otherwise known as spacetime wormholes.

There is essentially only one possible resolution to the Factorization Paradox: in a complete, holographic theory of quantum gravity, the net sum of all wormholes in the gravitational path integral must exactly vanish \cite{MaMax,MV,Saad:2021rcu,Saad:2021uzi,Blommaert:2021fob}.\footnote{A trivial way to achieve factorization would be give a description of the theory that manifestly factorizes. For example, we could attempt to excludes wormholes by fiat, as envisioned in \cite{StHW, tHoo}. Alternatively, we could retreat to the holographically dual description in terms of a local path integral on the boundary. The puzzle of the Factorization Paradox is to understand how a holographic theory could still admit a description in terms of a gravitational path integral which seems to include wormholes and does not manifestly factorize.} This cancelation of all wormholes has been demonstrated in a few very specialized corners of string theory, such as a particular tensionless limit \cite{Eberhardt:2021jvj} or when computing supersymmetric indices \cite{Iliesiu:2021are}. Importantly, these examples involve supersymmetric localization, and the necessary cancelations occur simply, between nearby wormhole configurations related by the action of supersymmetry. In more generic contexts, we should not expect any simple cancelation between wormholes beyond what is required by holography. Thus, even if we know the final answer must vanish, it is not  obviously clear how to organize the sum over wormholes in the absence of a complete theory of quantum gravity.

In \cite{Saad:2021rcu,Saad:2021uzi,Blommaert:2021fob}, a useful organizing principle has been proposed. The idea is to view the Factorization Paradox through the lens of effective field theory (EFT), and group all non-factorizating configurations in our complete theory into two collections: those which can be described in the EFT as smooth, geometric wormholes, and those which cannot. We will refer to the second class as \textit{stringy wormholes}, which could be Planckian, non-geometric, or something even wilder. While the net sum of all geometric and stringy wormholes vanishes by assumption, the sum over geometric wormholes on its own need not vanish. Thus, in order to define an effective gravitational path integral that matches the microscopics, we must include additional non-localities beyond the geometric wormholes in order to parametrize the effects of stringy wormholes that have been excluded from our effective description.

In this paper, we propose a tight formal analogy between these additional non-localities, which we dub \textit{counter-wormholes}, and the counterterms needed to cancel ultraviolet (UV) divergences in the perturbative calculation of QFT observables.\footnote{Our original motivation for this analogy was the observation that loop divergences arise from shrinking a loop to zero size, and thus inherently involve topology change on the particle worldline. See also e.g. \cite{Anous:2020lka,Post:2022dfi} for recent discussions of a possible analogy between topology change and perturbative quantum field theory.} If we attempted to use finite, effective coupling constants to directly evaluate Feynman graphs, we would obtain physically unreasonable answers involving UV divergences from the integration over high-energy modes of our EFT fields. Analogously, if we attempted to define a gravitational path integral including only those wormholes visible in the EFT, we would obtain physically unreasonable answers involving non-factorization or ensembles arising from wormhole contruibutions in our effective gravitational path integral.

In both cases, these physically unreasonable answers should not be taken too seriously. They are not an inconsistency of the theory, but rather signal the importance of some UV effects we have incorrectly ignored in our effective description.\footnote{See \cite{Hernandez-Cuenca:2024pey} for further comments on the relationship between spacetime wormholes and the coarse-graining of UV-complete theories.} In the case of perturbation theory, the resolution is to modify our naive action by the addition of UV divergent counterterms which precisely cancel against the UV divergences in Feynman graphs. If we then calculate using a renormalized perturbation theory that includes the counterterms, we obtain finite answers for physically observable quantities which can be matched to our effective description. Analogously, we view the prescriptions of \cite{Saad:2021rcu,Saad:2021uzi,Blommaert:2021fob} as instances of a \textit{renormalized gravitational path integral}, where we have modified the naive sum over smooth spacetimes by the addition of counter-wormholes in order to obtain physically reasonable, factorizing answers for disconnected correlation functions.

This approach to the Factorization Paradox has mainly been studied on a case-by-case basis. Our goal in proposing an analogy with perturbative renormalization is to look for precise gravitational analogs of well-known structures that control the systematics of renormalization. In particular, the perturbative calculation of counterterms is controlled by an algebraic structure, the Connes-Kreimer Hopf algebra $\mathcal{H}_{\rm CK}$ \cite{Connes:1998qv,Connes:1999yr,Connes:2000fe}, which formalizes the recursive Bogoliubov-Parasiuk-Hepp-Zimmermann (BPHZ) procedure \cite{Bogoliubow_Parasiuk_1957,Hepp:1966eg,Zimmermann:1969jj}.\footnote{See Section \ref{sec:PERT_RENORM} or \cite{CM1} for a review.} This Hopf algebra efficiently encodes the combinatorial structure of divergences in Feynman graphs, taking into account the possibility of sub-divergences: sub-graphs of a given graph which are independently divergent, and which are already renormalized by a local counterterm at an earlier stage of the recursive procedure.

In an extremely simple gravitational path integral \cite{MaMax}, we explain how the calculation of counter-wormholes is controlled by an analogous Hopf algebra, which turns out to be the well-known Faà di Bruno Hopf algebra $\mathcal{H}_{\rm FdB}$. We show how $\mathcal{H}_{\rm FdB}$ efficiently encodes the combinatorics of multi-boundary wormholes and systematizes the construction of a factorizing renormalized gravitational path integral. The analogs of sub-divergences are sub-wormholes: embedded submanifolds of a given spacetime which are themselves wormholes, and which might already be canceled by counter-wormholes at an earlier stage of the gravitational analog of the BPHZ procedure.\footnote{A very similar recursive algorithm was used in \cite{Blommaert:2021fob} in order to achieve all-order factorization.} While the toy model we consider is quite simple, our hope is to identify algebraic structures that might serve as useful organizing structures for understanding factorization in more realistic theories.

One immediate upshot of this result is the existence of a symmetry group of the renormalized gravitational path integral. In the case of perturbative renormalization, the group $G_{\rm CK} = {\rm Spec}(\mathcal{H}_{\rm CK})$ dual to the Connes-Kreimer Hopf algebra acts as a group of symmetries of renormalized perturbation theory, reorganizing the sum over Feynman diagrams and counterterms.\footnote{Recall that the spectrum ${\rm Spec}(A)$ of a commutative $\mathbb{C}$-algebra $A$ is the set of algebra homomorphisms $A \to \mathbb{C}$. See \ref{prop:dual_group_to_Hopf_algebra} for the definition of the group structure on ${\rm Spec}(\mathcal{H})$ for a commutative Hopf algebra $\mathcal{H}$.} Analogously, the group $G_{\rm FdB} = {\rm Spec}(\mathcal{H}_{\rm FdB})$ acts as a symmetry of the gravitational path integral, reorganizing the sum over spacetime manifolds and counter-wormholes. One instance of this group action is the integration out of microscopic wormholes into non-local effects as desribed by Coleman, Giddings, and Strominger \cite{Col,GS1,GS2}. Another is the cancelation of microscopic, stringy wormholes against smooth, geometric wormholes, which can be interpreted as a form of ER = EPR \cite{Maldacena:2013xja}.\footnote{See \cite{JMkitp} for further comments on the meaning of ER = EPR in spacetime, as opposed to merely in space, and its role in gauge fixing the sum over topologies.}

More generally, we view the action of $G_{\rm FdB}$ as realizing the gravitational gauge redundancies of \cite{Jafferis:2017tiu, MaMax} in our toy model, illustrating explicitly that the different gauge fixings described in \cite{MaMax} are related by the action of $G_{\rm FdB}$. These gravitational gauge redundancies between spacetimes of distinct topology go hand in hand with the cancelation of non-factorizing contributions,\footnote{In general, gauge redundancies and detailed cancelations are closely related: a gauge redundancy implies the cancelation of anything non-gauge-invariant, and conversely, a cancelation suggests the existence of a gauge fixed description in which the cancelations are made manifest.} and seem to be essential to the resolution of the Factorization Paradox.\footnote{These gauge redundancies also plays a key role in understanding how bulk EFT could remain valid in a restricted sense behind old black hole horizons \cite{Akers:2022qdl}.} In most cases, these gravitational gauge redundancies have been described rather abstractly via the existence of null states. We find it intriguing that in a toy model these gauge redundancies are realized concretely by a group action, and hope this structure persists in general.

The rest of this paper is organized as follows. In Section \ref{sec:PERT_RENORM}, we review the BPHZ approach to perturbative renormalization and its formalization in terms of the Connes-Kreimer Hopf algebra, highlighting the physical and mathematical structures relevant for our analogy. In Section \ref{sec:GRAV_BPHZ}, we describe the very simple toy model of \cite{MaMax} and explicitly carry out the gravitational analog of the BPHZ procedure. We also show that the algebraic structure of this gravitational BPHZ is captured by the Faà di Bruno Hopf algebra $\mathcal{H}_{\rm FdB}$ and its dual group $G_{\rm FdB}$. Finally in Section \ref{sec:DISC}, we comment on the lessons that may be drawn from our results, and discuss some possible extensions and applications.

The analogy drawn in this paper between the gravitational path integral and perturbative renormalization is summarized in Table \ref{tab:analogy}.

\begin{table}[!ht]
    \centering
    \begingroup
    \renewcommand{\arraystretch}{2.2}
    \scalebox{.8}{\begin{tabular}{cc}
    \toprule
\textbf{Gravity} & \textbf{Renormalization} \\
\cmidrule(lr){1-2}
Multiboundary partition functions &  1PI effective action\\
\cmidrule(lr){1-2}
Semiclassical wormholes &  Divergent Feynman diagrams\\
\cmidrule(lr){1-2}
Sub-wormholes & Sub-divergences\\
\cmidrule(lr){1-2}
Stringy wormholes & Counterterms  \\
\cmidrule(lr){1-2}
Faà di Bruno Hopf algebra & Connes--Kreimer Hopf algebra\\
\bottomrule
\end{tabular}}
\endgroup
\caption{The analogy between the gravitational path integral and perturbative renormalization.}
    \label{tab:analogy}
\end{table}

\section{Review of perturbative renormalization}\label{sec:PERT_RENORM}

In this section, we briefly review the BPHZ approach to perturbative renormalization \cite{Bogoliubow_Parasiuk_1957,Hepp:1966eg,Zimmermann:1969jj} and its algebraic formalization \cite{Connes:1998qv,CM1} in terms of the Connes-Kreimer Hopf algebra, with an eye towards our analogy with the gravitational path integral.\footnote{We assume the reader is somewhat familiar with the basics of perturbative renormalization. For a comprehensive review, see \cite{CM1}.} First, in Section \ref{sec:DIV_AND_COUNTER}, we recall the basic idea of renormalization. Next, in Section \ref{sec:SUBDIV_BPHZ}, we review the BPHZ procedure for handling sub-divergences. Finally, in Section \ref{sec:CONNES_KREIMER}, we review the formalization of perturbative renormalization in terms of the Connes-Kreimer Hopf algebra.

\subsection{Divergences and counterterms}\label{sec:DIV_AND_COUNTER}

The starting point of perturbative renormalization is a $D$-dimensional local action functional of some dynamical fields $\phi$:
\begin{equation}
    S(\phi) = \int \mathrm{d}^D x\ \mathcal{L}(\phi).
\end{equation}
For notational simplicity we consider a single scalar field $\phi$. We take the Lagrangian $\mathcal{L}(\phi)$ to be of the form
\begin{equation}
    \mathcal{L}(\phi) = \frac{1}{2} \left( \partial \phi \right)^2 - \frac{1}{2} m^2 \phi^2 + \mathcal{L}_{\rm int}(\phi).
\end{equation}
The interaction Lagrangian $\mathcal{L}_{\rm int}(\phi)$ is assumed to be a polynomial (or power series) in $\phi$ and its derivatives. For example, we might consider $\phi^3$ theory, defined by
\begin{equation}\label{eq:phi_3}
    \mathcal{L}_{\rm int}(\phi) = \frac{1}{3!} g \phi^3.
\end{equation}
Each possible monomial in the interaction Lagrangian comes with a coupling constant, such $g$ above, which we view as tunable parameters defining the theory.\footnote{For field strength and mass renormalization, we must include the possibility of quadratic terms in $\mathcal{L}_{\rm int}(\phi)$.}

The action functional $S(\phi)$ defines a (Euclidean) QFT through the formal path integral
\begin{equation}\label{eq:path_integral}
    \mathcal{Z} = \int \mathcal{D}\phi\ e^{- S(\phi)/\hbar}.
\end{equation}
Physical observables are computed perturbatively in $\hbar$ via a sum over Feynman graphs $\Gamma$, whose edges are labeled by particle species and whose vertices are labeled by monomials in the interaction Lagrangian $\mathcal{L}_{\rm int}$. Each Feynman graph $\Gamma$ defines an integral
\begin{equation}
    U(\Gamma)(p_1, \dots, p_N) = \int \frac{{\rm d}^D k_1}{(2 \pi)^D} \cdots \frac{{\rm d}^D k_L}{(2 \pi)^D}\ I_\Gamma(p_1, \dots, p_N, k_1, \dots, k_L),
\end{equation}
over unconstrained loop momenta $k_i$, refered to as the \textit{unrenormalized Feynman integral} of the graph $\Gamma$. Here $N$ is the number of external legs of $\Gamma$ and $L$ is the loop number (first Betti number) of $\Gamma$. The integrand $I_\Gamma$ is computed via truncated \textit{Feynman rules}, which assign a propagator to each internal edge and an interaction term to each vertex. Our convention is that the powers of $\hbar$ and the couplings $g_i$ are excluded from the integrals $U(\Gamma)$, for reasons that will be important later; to recover the more standard values assigned to a Feynman graph $\Gamma$, we must multiply $U(\Gamma)$ by appropriate powers of the $\hbar$ and the couplings, as in \eqref{eq:S_eff} below. For example, in $\phi^3$ theory \eqref{eq:phi_3}, the Feynman graph $\Gamma$ illustrated in Figure \ref{fig:diagram} represents the integral
\begin{equation}
    U(\Gamma)(p) = \int \frac{{\rm d}^D k}{(2 \pi)^D}\ \left(\frac{1}{k^2 + m^2}\right) \left(\frac{1}{(p+k)^2 + m^2}\right).
    \label{eq:diagram}
\end{equation}
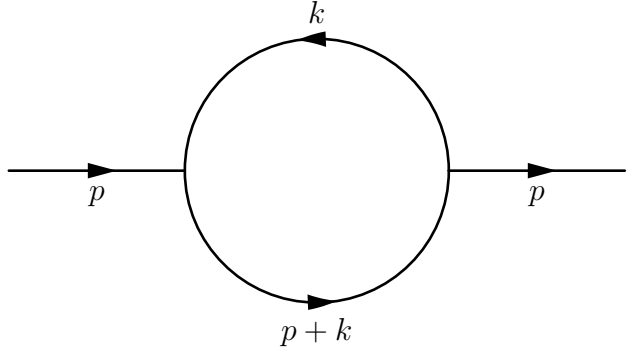
\begin{figure}
\centering
\begin{fmffile}{45}
\begin{fmfgraph*}(233,144)
     \fmfleft{i}
     \fmfright{o}
     \fmf{fermion,tension=3,label=$p$}{i,v1}
     \fmf{fermion,left=1,right,label=$p+k$,label.side=top}{v1,v2}
     \fmf{fermion,left=1,right,label=$k$,label.side=bottom}{v2,v1}
     \fmf{fermion,tension=3,label=$p$}{v2,o}
\end{fmfgraph*}
\end{fmffile}
\caption{A simple Feynman diagram $\Gamma$ in $\phi^3$ theory presented in \cite{CM1}, whose unrenormalized Feynman integral $U(\Gamma)$ is given by \eqref{eq:diagram}. The integration is over the internal momentum $k$ that runs through the loop.}
\label{fig:diagram}
\end{figure}

An efficient way to encode all physical observables of our QFT at once is through the \textit{effective action} $S_{\rm eff}(\phi)$, computed via a sum over all \textit{one-particle irriducible} (1PI) Feynman graphs as
\begin{equation}\label{eq:S_eff}
    S_{\rm eff}(\phi) = S(\phi) - \sum_{\Gamma \in {\rm 1PI}} \hbar^L (- g_I)^{n_I} \frac{U(\Gamma)(\phi)}{\sigma(\Gamma)}.
\end{equation}
There are various things to explain in this equation. First of all, for a given graph $\Gamma$, $L$ is the loop number and $n_I = (n_1, n_2, \dots)$ are the numbers of vertices associated to couplings $g_i$. We have further abbreviated $(-g_I)^{n_I} = \prod_i (-g_i)^{n_i}$ for the product of all couplings associated to vertices in $\Gamma$. The minus signs in \eqref{eq:S_eff} arise from the minus sign in the path integral \eqref{eq:path_integral}. The symmetry factor $\sigma(\Gamma)$ is the order of ${\rm Aut}(\Gamma)$ defined holding external legs fixed. Finally, the functional $U(\Gamma)(\phi)$ is defined in terms of the momentum-space fields $\phi(p)$ by
\begin{equation}
    U(\Gamma)(\phi) = \frac{1}{N!} \int_{\sum p_i = 0} \frac{{\rm d}^D p_1}{(2 \pi)^D} \cdots \frac{{\rm d}^D p_N}{(2 \pi)^D}\ \phi(p_1) \cdots \phi(p_N)\ U(\Gamma)(p_1, \dots, p_N).
\end{equation}
An important note is that the relevant notion of 1PI Feynman graph for the sum in \eqref{eq:S_eff} excludes trees, since the first term $S(\phi)$ can be interpreted as the contribution of tree graphs to the effective action $S_{\rm eff}(\phi)$. The benefit of the effective action is that it reproduces, at tree-level, the full perturbative expansion of physical observables computed from $S(\phi)$ at all loops.

Famously, if we naively take our coupling constants in the action $S(\phi)$ to be finite numbers, the integrals $U(\Gamma)$ will contain UV divergences from the integration over large values of the loop momenta $k_i$. These UV divergences will show up in the calculation of the effective action $S_{\rm eff}(\phi)$, producing unreasonable answers for physical observables. The basic idea of renormalization is that the bare coupling constants appearing in the path integral are not directly observable. Thus, demanding that they be finite is a mis-application of the relevant physical condition: the couplings that need to be finite are the effective couplings in $S_{\rm eff}(\phi)$, not the bare couplings. We are free to take the bare couplings to be divergent, if doing so allows us to produce finite answers to physically meaningful questions.

Suppose, then, that we are given an action functional $S(\phi)$ with finite couplings, and we seek to use it to produce a physically sensible theory. Rather than directly using $S(\phi)$ to define a path integral as in \eqref{eq:path_integral}, we must first modify our action by the addition of UV divergent \textit{counterterms} to define a \textit{bare action} $S_{\rm bare}(\phi)$.\footnote{A possible point of contention is whether the counterterms are something to be added or whether they were always there in the first place. As discussed in the introduction, the physically meaningful answer is that they must have always been there, as the counterterms represent unknown UV physics needed to render the theory sensible. However, our starting point here is more formal, given by the problem of defining a sensible physical theory out of an action principle with finite couplings. This will be analogous to the formal problem of defining a physically sensible theory out of a given gravitational path integral.} These counterterms must be precisely tuned so that the UV divergences they produce exactly cancel against every UV divergence arising from loops. Once appropriate counterterms are chosen, we may then define a path integral using the bare action, and derive a finite effective action as in \eqref{eq:S_eff} by using the Feynman rules of $S_{\rm bare}(\phi)$.

Practically speaking, to compute appropriate counterterms, one must first choose a regularization scheme, say dimensional regularization (dimreg), in order to regulate the divergences in loop integrals. This yields finite values for the unrenormalized Feynman integrals $U(\Gamma)$ at the cost of dependence on an unphysical cutoff parameter $\varepsilon$, which must be taken to zero at the end of the calculation. The counterterms are also taken to be functions of $\varepsilon$, and the goal is to choose counterterms such that the physical observables computed with counterterms included yield finite values in the limit $\varepsilon \to 0$. To make this choice of counterterms, we must first choose a subtraction scheme, such as minimal subtraction (MS), that specifies the ``divergent part'' of any regulated loop integral.

\subsection{Sub-divergences and the BPHZ procedure}\label{sec:SUBDIV_BPHZ}

The essential insight of the BPHZ procedure is that counterterms, like physical observables, can be computed perturbatively via a sum over Feynman graphs. To each 1PI Feynman graph $\Gamma$, we associate a counterterm $C(\Gamma)$ which is chosen in order to precisely cancel UV divergences in $U(\Gamma)$.\footnote{As with the unrenormalized Feynman integrals $U(\Gamma)$, our convention is that $C(\Gamma)$ excludes powers of $\hbar$ and the couplings, which must be reintroduced to obtain the more conventional values of the counterterms.} Each counterterm defines a functional $C(\Gamma)(\phi)$, which is normally required to be local. The bare action is then defined via the addition of the counterterms for every 1PI graph $\Gamma$, as follows.
\begin{equation}\label{eq:S_bare}
    S_{\rm bare}(\phi) = S(\phi) - \sum_{\Gamma \in {\rm 1PI}} \hbar^L (- g_I)^{n_I} \frac{C(\Gamma)(\phi)}{\sigma(\Gamma)}.
\end{equation}
Note the similarity to the calculation of the effection action \eqref{eq:S_eff}.

An essential complication is the possibility of sub-divergences: sub-graphs $\gamma \subset \Gamma$ of a given Feynman graph which are divergent on their own. Figure \ref{fig:diagram0} presents a simple example of Feynman graph with a sub-divergence.
\begin{figure}
\centering
\includegraphics[scale=0.5]{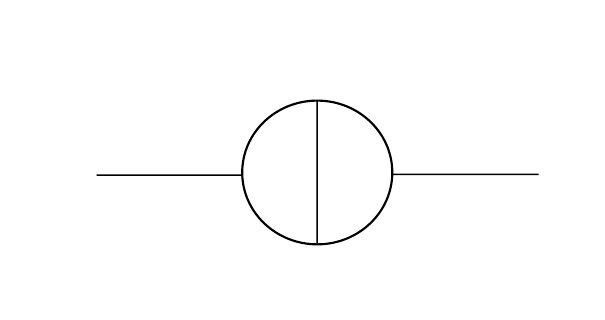}
\caption{A simple example of diagram with sub-divergences in $\phi^3$ theory. This diagram has loops nested inside other loops, which is the source of the recursive nature of the BPHZ procedure.}
\label{fig:diagram0}
\end{figure}
While the divergent subgraph $\gamma$ will certainly contribute to the divergences in $U(\Gamma)$,\footnote{In fact, it will contribute in a severe way: the divergent piece will be a non-local functional of the external momenta in $\Gamma$ which could not be canceled by any local counterterm $C(\Gamma)(\phi)$.} these divergences will already be taken into account by the counterterm $C(\gamma)$ we have chosen to cancel the divergence in $U(\gamma)$. If we were to include this divergence when computing $C(\Gamma)$, we would overcount the divergences, and fail in defining finite physical observables. Thus, even in dimreg with minimal subtraction, the counterterm $C(\Gamma)$ will not merely be given by the divergent piece of $U(\Gamma)$ when $\Gamma$ contains a divergent subgraph.

Instead, to compute the counterterm for any given Feynman graph $\Gamma$, we must follow the \textit{BPHZ procedure} \cite{Bogoliubow_Parasiuk_1957,Hepp:1966eg,Zimmermann:1969jj}, a recursive algorithm for computing counterterms while taking sub-divergences into account. To compute $C(\Gamma)$, we first identify all divergent subgraphs $\gamma \subset \Gamma$,\footnote{The subgraph $\gamma$ is assumed to be a non-empty, proper subgraph. However, $\gamma$ can be disconnected. If this happens, we define $U(\gamma)$ and $C(\gamma)$ to be the product of their values on each connected component of $\gamma$.} and recursively compute their counterterms $C(\gamma)$ via the BPHZ procedure. Then, we compute a \textit{prepared Feynman integral} $\overline{R}(\Gamma)$ by summing all relevant contracted graphs $\Gamma/\gamma$ weighted by the previously computed counterterms.\footnote{\label{footnote:external_data}In \eqref{eq:prepared_value}, each connected Feynman graph with $N$ external legs should be equipped with the data of a specific monomial of degree $N$ in the fields $\phi$, and the sum over subgraphs includes a sum over this data. The vertices in the contracted graph $\Gamma/\gamma$ obtained by contracting a component of $\gamma$ are labeled by the monomial chosen for $\gamma$.}
\begin{equation}\label{eq:prepared_value}
    \overline{R}(\Gamma) = U(\Gamma) + \sum_{\gamma \subset \Gamma} C(\gamma) U(\Gamma/\gamma).
\end{equation}
The prepared Feynman integral $\overline{R}(\Gamma)$ may still be divergent, but will only include those divergences in $\Gamma$ that do not arise from any subgraph. We may then compute the counterterm $C(\Gamma)$ to be the divergent piece of $\overline{R}(\Gamma)$, as determined by our subtraction scheme, so that the \textit{renormalized Feynman integral}
\begin{equation}\label{eq:renorm_value}
    R(\Gamma) = \overline{R}(\Gamma) + C(\Gamma)= C(\Gamma) + U(\Gamma) + \sum_{\gamma \subset \Gamma} C(\gamma) U(\Gamma/\gamma),
\end{equation}
is finite.

Once we have computed counterterms via the BPHZ procedure, we now have three distinct yet equivalent ways to incorporate them into the calculation of physical observables.

\textbf{First approach:} The first, most direct approach is to replace the unrenormalized Feynman integrals $U(\Gamma)$ with the renormalized ones $R(\Gamma)$ when calculating physical observables. In particular, we can replace $U(\Gamma)$ with $R(\Gamma)$ in \eqref{eq:S_eff}, in order to compute a physically reasonable effective action directly from $S(\phi)$:
\begin{equation}\label{eq:S_eff_renorm}
    S_{\rm eff}(\phi) = S(\phi) - \sum_{\Gamma \in {\rm 1PI}} \hbar^L (- g_I)^{n_I} \frac{R(\Gamma)(\phi)}{\sigma(\Gamma)}.
\end{equation}
This approach has the benefit of making the cancelation of UV divergences manifest, since each renormalized value $R(\Gamma)$ is independently finite. However, it has the downside of seeming a bit arbitrary. We have modified the way we evaluate Feynman diagrams, only including some finite piece of the Feynman integrals in our calculation.

\textbf{Second approach:} The second approach is to add the counterterms $C(\Gamma)$ to the action, in order to define a bare action as in \eqref{eq:S_bare}. We then forget where the bare action came from, and recompute physical observables using the ``unrenormalized Feynman integrals'' $U(\Gamma)$ computed with the Feynman rules derived from $S_{\rm bare}$. Graph by graph, we will encounter UV divergences which will cancel out of any physical observable, seemingly miraculously. While this approach obscures the cancelation of UV divergences, it makes manifest that we are playing the same sort of game as before: our theory is defined via the path integral by some local action functional, which simply happens to have UV divergent couplings. Notably in this approach, the cancelation occurs between Feynman diagrams of different loop orders.

\textbf{Third approach:} The third approach is somewhere between the first two, and consists of enlarging the set of Feynman graphs we include in our perturbative expansion. In addition to vertices labeled by interaction terms in $S(\phi)$, we now have additional vertices labeled by the counterterms. Now, each Feynman diagram will still be individually divergent, but the combinatorics of formula \eqref{eq:renorm_value} will be more readily visible, since we can keep track of the contributions of each counterterm.

Crucially, while these three approaches organize the perturbative expansion differently, they produce exactly equal answers for physical observables. The only difference is whether to group the counterterms $C(\Gamma)$ with the graphs $\Gamma$ from which they arise, add them all to the action, or leave them as an explicit, additional type of contribution to the perturbative expansion. This redundancy in how we organize the perturbative expansion is a form of gauge redundancy (or duality): we have multiple, exactly equivalent descriptions of the same physics. To see this redundancy more explicitly, we turn now to the algebraic formalization of renormalized perturbation theory.

\subsection{The Connes--Kreimer Hopf algebra}\label{sec:CONNES_KREIMER}

In the algebraic formalization of renormalization, the BPHZ procedure described above is efficiently packaged in terms of the Connes-Kreimer Hopf algebra $\mathcal{H}_{\rm CK}$, a commutative Hopf algebra over $\mathbb{C}$. As reviewed in Appendix \ref{sec:Hopf_algs}, a commutative Hopf algebra $\mathcal{H}$ defines a dual group $G = {\rm Spec}(\mathcal{H})$, whose group law $\star : G \times G \to G$ is defined in terms of the coproduct on $\mathcal{H}$. The group $G_{\rm CK} = {\rm Spec}(\mathcal{H}_{\rm CK})$ is precisely the group underlying the multiple descriptions of renormalized perturbation theory described in the previous section, and allows one to effortlessly pass between them.

In order to motivate the definition of $\mathcal{H}_{\rm CK}$, let us examine a formula that appeared at multiple points in the discussion above. In \eqref{eq:S_eff}, \eqref{eq:S_bare}, and \eqref{eq:S_eff_renorm}, we have seen transformations of the form
\begin{equation}\label{eq:CK_action}
    S(\phi) \to S(\phi) - \sum_{\Gamma \in {\rm 1PI}} \hbar^L (- g_I^S)^{n_I} \frac{F(\Gamma)(\phi)}{\sigma(\Gamma)},
\end{equation}
where $F(\Gamma)$ is some functional on the set of 1PI graphs and the couplings $g_I^S$ are those in the action $S$. The goal of this section is to interpret \eqref{eq:CK_action} as the action of $F \in G_{\rm CK}$ on the space of action functionals. In terms of this group action, the results of the previous section can be summarized via a commutative diagram:
\begin{equation}\label{eq:two_approaches}
   \begin{tikzcd}[column sep=huge]
    S \arrow[r, "C"] \arrow[dr, "R"'] & S_{\rm bare} \arrow[d, "U"] \\
    & S_{\rm eff}
    \end{tikzcd}
\end{equation}
In other words, as described above, we can obtain $S_{\rm eff}$ in three ways: by acting on $S$ with $R$, by acting on $S_{\rm bare}$ by $U$, or by acting on $S$ first with $C$ and then with $U$.

Our goal, then, is to define $\mathcal{H}_{\rm CK}$ such that $R, C,$ and $U$ are elements of the dual group $G_{\rm CK}$ of algebra homomorphisms out of $\mathcal{H}_{\rm CK}$, and further such that we have
\begin{equation}
    R = C \star U,
\end{equation}
under the group law of $G_{\rm CK}$.\footnote{The order of multiplication is chosen so that the action of $G_{\rm CK}$ on the space of action functionals \eqref{eq:CK_action} is a \textit{right} action.} We have already seen how $R, C$ and $U$ are related in \eqref{eq:renorm_value}: in order to evaluate $R$ on a Feynman graph $\Gamma$, we extract all possible subgraphs $\gamma \subset \Gamma$, and evaluate $C$ and $U$ on the subgraphs and quotient graphs respectively. This motivates the following definition.

\begin{defn}
    For a given QFT, the \textit{Connes--Kreimer Hopf algebra} $\mathcal{H}_{\rm CK}$ is the free commutative algebra over $\mathbb{C}$ generated by the set of 1PI Feynman graphs $\Gamma$ (excluding trees),\footnote{As in Footnote \ref{footnote:external_data}, each Feynman graph $\Gamma$ with $N$ external legs must be equipped with a choice of a monomial of degree $N$ in the fields $\phi$. This data is summed over in \eqref{eq:CK_coprod}. We will not keep careful track of this data in our exposition; for details, see \cite{CM1}.} equipped with counit $\varepsilon:\mathcal{H}_{\rm CK} \to \mathbb{C}$ and coproduct $\Delta : \mathcal{H}_{\rm CK} \to \mathcal{H}_{\rm CK} \otimes \mathcal{H}_{\rm CK}$ defined on algebra generators $\Gamma$ by $\varepsilon(\Gamma) = 0$ and
    \begin{equation}\label{eq:CK_coprod}
        \Delta(\Gamma) = \Gamma \otimes 1 + 1 \otimes \Gamma + \sum_{\gamma \subset \Gamma} \gamma \otimes \Gamma/\gamma,
    \end{equation}
    and extended multiplicatively.
\end{defn}

\begin{prop}[\cite{Connes:1999yr}]\label{prop:CK_is_Hopf}
    As defined above, $\mathcal{H}_{\rm CK}$ is a Hopf algebra.
\end{prop}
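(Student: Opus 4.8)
The plan is to take the algebra structure as given and verify the coalgebra axioms, and then produce an antipode. Since $\mathcal{H}_{\rm CK}$ is by construction the free commutative algebra on the 1PI generators, and both $\varepsilon$ and $\Delta$ are declared to be algebra homomorphisms, the bialgebra compatibility axioms (that $\Delta$ and $\varepsilon$ respect the product and unit) hold by fiat. It therefore suffices to check that $(\mathcal{H}_{\rm CK},\Delta,\varepsilon)$ is a coalgebra and that an antipode exists; and because $\Delta,\varepsilon$ are multiplicative, each coalgebra axiom need only be verified on the algebra generators $\Gamma$.

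The counit axiom is immediate: applying $\varepsilon \otimes \mathrm{id}$ to \eqref{eq:CK_coprod} and using $\varepsilon(\Gamma)=0=\varepsilon(\gamma)$ and $\varepsilon(1)=1$ annihilates every term except $1 \otimes \Gamma$, which returns $\Gamma$; the other counit identity follows by symmetry. The substantive step is coassociativity, $(\Delta \otimes \mathrm{id})\Delta = (\mathrm{id} \otimes \Delta)\Delta$ on each $\Gamma$. I would prove that both sides equal $\sum_{\gamma \subseteq \gamma' \subseteq \Gamma} \gamma \otimes (\gamma'/\gamma) \otimes (\Gamma/\gamma')$, where $\gamma,\gamma'$ range over nested divergent subgraphs, the degenerate endpoints $\gamma=\emptyset$ and $\gamma'=\Gamma$ reproducing the $1\otimes(-)$ and $(-)\otimes 1$ terms of \eqref{eq:CK_coprod}. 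Expanding the left side extracts the outer subgraph $\gamma'$ first and then a subgraph $\gamma\subseteq\gamma'$; expanding the right side extracts $\gamma$ first and then a subgraph $\delta \subseteq \Gamma/\gamma$. These two expansions are matched by the bijection sending a subgraph $\delta$ of $\Gamma/\gamma$ to the unique $\gamma'$ with $\gamma\subseteq\gamma'\subseteq\Gamma$ and $\gamma'/\gamma=\delta$, using the quotient identity $(\Gamma/\gamma)/(\gamma'/\gamma)=\Gamma/\gamma'$ to identify the third tensor factor.

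I expect this combinatorial lemma to be the main obstacle, for two reasons. First, one must verify the closure condition that underlies the bijection: a subgraph $\gamma'$ is divergent in $\Gamma$ precisely when $\gamma$ is divergent and $\gamma'/\gamma$ is divergent in $\Gamma/\gamma$, so that the same index set appears on both sides. Second, one must run the bookkeeping with the conventions of the definition --- disconnected subgraphs are allowed, with $U$ and $C$ (and here the coproduct) multiplicative over connected components, and each component carries a choice of external monomial that is summed over in \eqref{eq:CK_coprod}. Checking that the induced vertex labels on $\Gamma/\gamma$, and then on $(\Gamma/\gamma)/(\gamma'/\gamma)$, agree with those on $\Gamma/\gamma'$ is where the care is required.

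For the antipode I would invoke the connected graded structure rather than constructing it by hand. Grading $\mathcal{H}_{\rm CK}$ by loop number (equivalently by the number of internal edges) makes the degree-zero part equal to $\mathbb{C}\cdot 1$, and $\Delta$ is graded because both $\gamma$ and $\Gamma/\gamma$ have strictly smaller degree than a proper $\Gamma$. For any connected graded bialgebra the antipode exists and is unique, given recursively on generators by $S(1)=1$ and $S(\Gamma) = -\Gamma - \sum_{\gamma\subset\Gamma} S(\gamma)\,(\Gamma/\gamma)$ and extended antimultiplicatively; an induction on degree then confirms $m(S\otimes\mathrm{id})\Delta = \eta\varepsilon = m(\mathrm{id}\otimes S)\Delta$. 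Thus, once coassociativity is established, the antipode --- and hence the Hopf algebra structure --- comes essentially for free.
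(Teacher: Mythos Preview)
Your proposal is correct and follows essentially the same route as the paper: verify the counit axiom directly, establish coassociativity by showing both iterated coproducts equal the nested-subgraph sum $\sum_{\gamma \subseteq \gamma' \subseteq \Gamma} \gamma \otimes (\gamma'/\gamma) \otimes (\Gamma/\gamma')$, and then obtain the antipode from the connected graded bialgebra structure with grading by loop number. The paper's own proof is terser on the combinatorial bijection and bookkeeping you flag as the main obstacle, simply asserting the nested-subgraph formula and citing the connected graded criterion (Proposition~\ref{prop:conn_graded}); your more careful discussion of the quotient identity $(\Gamma/\gamma)/(\gamma'/\gamma)=\Gamma/\gamma'$ and the divergence closure condition fills in exactly what the paper leaves implicit.
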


\begin{proof}
    We follow the proof of \cite[Theorem 1.27]{CM1}. By definition, $\varepsilon$ and $\Delta$ are algebra homomorphisms, and it is immediate from equation \eqref{eq:CK_coprod} that $\varepsilon$ is a two sided counit for $\Delta$. For coassociativity, observe that both $(\Delta \otimes {\rm Id})(\Delta(\Gamma))$ and $({\rm Id} \otimes \Delta)(\Delta(\Gamma))$ can be written as a sum over nested subgraphs,
    \begin{equation}\label{eq:CK_coprod2}
        \Delta^2(\Gamma) = \sum_{\gamma \subset \gamma' \subset \Gamma} \gamma \otimes \gamma'/\gamma \otimes \Gamma/\gamma',
    \end{equation}
    where now the sum includes both empty and improper subgraphs and we interpret the empty graph as the unit $1 \in \mathcal{H}_{\rm CK}$.

    We have shown that $\mathcal{H}_{\rm CK}$ is a bialgebra. The existence of an antipode follows from Proposition \ref{prop:conn_graded} upon noting that $\mathcal{H}_{\rm CK}$ is a connected graded bialgebra, where the grading of a generator $\Gamma$ is its loop number.
\end{proof}

\begin{defn}
    The \textit{Connes--Kreimer group} $G_{\rm CK}$ is the spectrum ${\rm Spec}(\mathcal{H}_{\rm CK})$ (see \cite{Connes:1998qv,CM1}). The multiplication of two elements $F, G \in G_{\rm CK}$ (given by arbitrary maps from the set of 1PI Feynman graphs to $\mathbb{C}$) is defined by \begin{equation}\label{eq:CK_group_law}
    (F \star G)(\Gamma) = F(\Gamma) + G(\Gamma) + \sum_{\gamma \subset \Gamma} F(\gamma) G(\Gamma/\gamma),
    \end{equation}
    as in \eqref{eq:renorm_value} for the product $R = C \star U$.
\end{defn}

\begin{prop}[\cite{Connes:2000fe}]\label{thm:CK_action} The formula \eqref{eq:CK_action} defines a right action of $G_{\rm CK}$ on the space of action functions $S(\phi)$.
\end{prop}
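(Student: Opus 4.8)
The plan is to verify the two axioms of a right group action for the assignment $S \mapsto S \cdot F$ defined by \eqref{eq:CK_action}: first that the identity element of $G_{\rm CK}$ acts trivially, and second that $(S \cdot F) \cdot G = S \cdot (F \star G)$ for all $F, G \in G_{\rm CK}$, with $\star$ the group law \eqref{eq:CK_group_law}. The identity axiom is immediate: the unit of $G_{\rm CK}$ is the counit $\varepsilon$, which satisfies $\varepsilon(\Gamma) = 0$ on every generator, so every term in the sum \eqref{eq:CK_action} vanishes and $S \cdot \varepsilon = S$.

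For compatibility, the key observation is that acting by $F$ transforms the coupling constants. Reading off \eqref{eq:CK_action}, the action $S \cdot F$ is again a local functional in which the coupling of each interaction monomial $J$ of degree $N$ becomes a power series in the original couplings, $g_J^{S \cdot F} = g_J^S - \sum_{\Gamma} \hbar^L (-g_I^S)^{n_I} F(\Gamma)/\sigma(\Gamma)$, where the sum runs over 1PI graphs $\Gamma$ carrying the external data $J$. First I would record this transformation rule precisely, tracking the monomial data of Footnote~\ref{footnote:external_data}, and then substitute the transformed couplings $g_I^{S \cdot F}$ into the $G$-action \eqref{eq:CK_action}.

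The heart of the argument is a combinatorial lemma. In the $G$-term, the factor $(-g_I^{S \cdot F})^{n_I}$ is a product over the vertices of each graph $\Gamma$, and each transformed coupling is itself an $F$-weighted sum over graphs; expanding the product therefore produces a sum over all ways of grafting $F$-weighted subgraphs $\gamma$ onto the vertices of $\Gamma$. Writing the enlarged graph as $\widetilde\Gamma$, with $\gamma \subset \widetilde\Gamma$ the inserted subgraph and $\Gamma = \widetilde\Gamma/\gamma$ the contraction, I would reorganize the double sum into a single sum over $\widetilde\Gamma$ with weight $F(\widetilde\Gamma) + G(\widetilde\Gamma) + \sum_{\gamma \subset \widetilde\Gamma} F(\gamma)\, G(\widetilde\Gamma/\gamma)$. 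By \eqref{eq:CK_group_law} this is exactly $(F \star G)(\widetilde\Gamma)$: the term $G(\widetilde\Gamma)$ comes from using the unshifted couplings $g_I^S$ with no insertion, the term $F(\widetilde\Gamma)$ from the part of $S \cdot F$ left untouched by $G$, and the sum from the genuine insertions of $\gamma$, so that $(S \cdot F) \cdot G = S \cdot (F \star G)$.

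The main obstacle I anticipate is the bookkeeping of symmetry factors: one must check that, summed over all insertions producing a fixed $\widetilde\Gamma$, the factors $\sigma(\gamma)$ and $\sigma(\widetilde\Gamma/\gamma)$ together with the insertion multiplicities recombine into $1/\sigma(\widetilde\Gamma)$, matching the normalization in \eqref{eq:CK_action}. This is precisely the assertion that graph insertion is the operation dual to the coproduct \eqref{eq:CK_coprod}, and it is where the 1PI and connectedness hypotheses, together with the external-data conventions of Footnote~\ref{footnote:external_data}, are essential. Establishing this duality completes the verification of the two action axioms.
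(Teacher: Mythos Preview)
Your proposal is correct and follows essentially the same approach as the paper: both rely on the combinatorial identity that expanding the shifted couplings $g_I^{S\cdot F}$ over the vertices of a graph reproduces the sum over subgraphs in the coproduct, yielding $(S\cdot F)\cdot G = S\cdot(F\star G)$. The only cosmetic difference is that the paper runs the computation in the opposite direction (starting from $S\cdot(F\star G)$ and absorbing the $F$-terms into shifted couplings), and it simply asserts the symmetry-factor bookkeeping you flag as the main obstacle without spelling it out.
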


\begin{proof}
    Clearly, the identity element $1 \in G_{\rm CK}$ given by the counit $\varepsilon: \mathcal{H}_{\rm CK} \to \mathbb{C}$ acts trivially, since $\varepsilon(\Gamma) = 0$ for all 1PI graphs $\Gamma$. Now, for $F, G \in G_{\rm CK}$, we have that
    \begin{equation}
        (S \cdot (F \star G))(\phi) = S(\phi) - \sum_{\Gamma \in {\rm 1PI}} \hbar^L (- g_I^S)^{n_I} \frac{(F\star G)(\Gamma)(\phi)}{\sigma(\Gamma)}.
    \end{equation}
    Expanding $F \star G$, we obtain
    \begin{equation}
        (S \cdot (F \star G))(\phi) = S(\phi) - \sum_{\Gamma \in {\rm 1PI}}\sum_{\gamma\subset\Gamma} \hbar^L (- g_I^S)^{n_I} \frac{F(\gamma)(\phi)G(\Gamma/\gamma)(\phi)}{\sigma(\Gamma)},
    \end{equation}
    where to alleviate notations compared to Equation \eqref{eq:CK_group_law}, we have included the cases $\gamma=\emptyset$ and $\gamma=\Gamma$ in the sum.
    This sum can be rewritten as 
    \begin{equation}\label{eq:end_of_proof}
        (S \cdot (F \star G))(\phi) = S(\phi) - \sum_{\Gamma^\prime \in {\rm 1PI}}\hbar^L (- g_I^{S \cdot F})^{n_I} \frac{G(\Gamma^\prime)(\phi)}{\sigma(\Gamma^\prime)},
    \end{equation}
    where \begin{equation}g_I^{S \cdot F}=g_I-\sum_{\gamma \in {\rm 1PI}} \hbar^L (- g_I^S)^{n_I} \frac{F(\gamma)(\phi)}{\sigma(\gamma)},\end{equation}
    are the coupling constants appearing in $S \cdot F$. But \eqref{eq:end_of_proof} is simply the expression for $((S \cdot F) \cdot G)(\phi)$.
\end{proof}

Note that in the proof of Proposition \ref{thm:CK_action}, it was important that the action of $G$ on $S \cdot F$ involved the shifted coupling constants $g_I^{S \cdot F}$ appearing in the action functional $S \cdot F$. This is the reason why we excluded the coupling constants from our definitions of the functionals $U, C, R \in G_{\rm CK}$ above, as elements of $G_{\rm CK}$ must be defined independently from the coupling constants in the action functionals on which they act.

\section{A gravitational BPHZ procedure}\label{sec:GRAV_BPHZ}

In the previous section, we summarized the BPHZ procedure underlying perturbative renormalization, and described the underlying Hopf algebraic structure discovered by Connes and Kreimer. We now turn to formulating an analogous procedure to systematically restore factorization in simple two-dimensional topological theories of gravity by the addition of \textit{counter-wormholes}, explicit non-localities in the bulk action.

We will first review a topological theory based on a gravitational path integral due to Marolf and Maxfield \cite{MaMax} which fails to factorize due to the presence of spacetime wormholes. We will then show that this theory can be systematically corrected into a factorizing theory by introducing counter-wormholes, following a precise analog of the BPHZ procedure underlying perturbative renormalization. This procedure is quite similar to that considered in \cite{Blommaert:2021fob}; our goal here is to illuminate the algebraic structures controlling this approach to resolving the Factorization Paradox.

\subsection{Marolf--Maxfield theory}

In \cite{MaMax}, Marolf and Maxfield introduced an exactly solvable topological theory of gravity in two dimensions. The observables of this theory are the $n$-boundary correlation functions $\braket{Z^n}$, defined via a discrete path integral over the set $\mathcal{V}_n$ of (diffeomorphism classes of) oriented smooth surfaces $\Sigma$ with $n$ labeled circular boundaries:
\begin{equation}
   \braket{Z^n} = \mathfrak{Z}^{-1} \sum_{\Sigma \in \mathcal{V}_n} \mu(\Sigma)e^{S_0\tilde{\chi}(\Sigma)}.
    \label{eq:MaMaxDef}
\end{equation}
Here, $\mathfrak{Z}$ is a normalizing factor, $\tilde{\chi}(\Sigma) = \chi(\Sigma) + n$ is a modified Euler characteristic of $\Sigma$, and $\mu(\Sigma)$ is a symmetry factor, defined by
\begin{equation}
\mu(\Sigma)=\frac{1}{\prod_gm_g!},
\end{equation}
where the $m_g$ are the numbers of \textit{closed universes} (connected components of $\Sigma$ with no boundary) of genus $g$. The modified Euler characteristic $\tilde{\chi}(\Sigma)$ is independent of the number of boundaries of $\Sigma$, since adding a boundary decreases $\chi(\Sigma)$ by one while increasing $n$ by one.

Following \cite{MaMax}, we may simplify \eqref{eq:MaMaxDef} by integrating out all higher genus surfaces and closed universes. For a given connected component of $\Sigma$, summing over genus gives a contribution $\lambda$ to the path integral, where
\begin{equation}\label{eq:lambda}
    \lambda = \sum_{g = 0}^\infty e^{- S_0 (2 - 2 g)} = \frac{e^{2 S_0}}{1 - e^{-2 S_0}}.
\end{equation}
Moreover, the sum over closed universes exponentiates into an overall prefactor $e^\lambda$, which can be absorbed by choosing $\mathfrak{Z} = e^\lambda$. All that remains is to sum over the possible genus zero surfaces $\Sigma$ connected to the boundary. Let $\mathcal{W}_n$ denote the subset of $\mathcal{V}_n$ consisting of genus zero surfaces with $n$ boundaries with no closed components. We have
\begin{equation}\label{eq:MaMaxAct}
    \braket{Z^n} = \sum_{\Sigma \in \mathcal{W}_n} \lambda^{k_\Sigma},
\end{equation}
where $k_\Sigma$ is the number of connected components of $\Sigma$. Note that an element of $\mathcal{W}_n$ is uniquely specified by a partition of the set of boundary components, so we may equivalently write \eqref{eq:MaMaxAct} as a sum over partitions of $n$.

Evaluating \eqref{eq:MaMaxAct} as in \cite{MaMax} gives
\begin{equation}\label{eq:Poisson_moments}
    \braket{Z^n} = B_n(\lambda),
\end{equation}
where $B_n$ is the Bell polynomial of order $n$. See Figure \ref{fig:z3mm} for an illustration of the computation of $\braket{Z^3}$ in a Marolf--Maxfield theory. 
\begin{figure}
    \centering
    \includegraphics[scale=0.7]{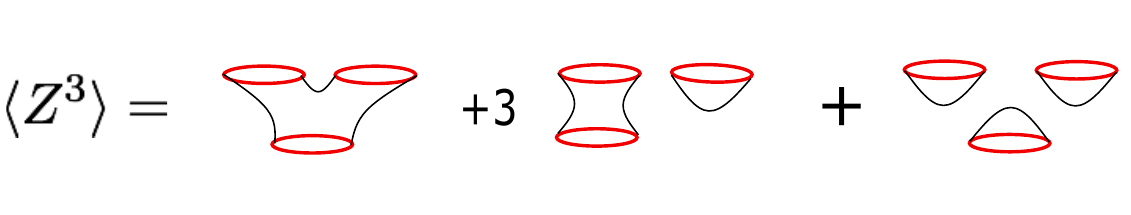}
    \caption{The computation of the spacetime amplitude $\braket{Z^3}$ in a Marolf--Maxfield theory. The first topology has 1 connected component so it contributes $\lambda$, the second topology has 2 connected components so it contributes $\lambda^2$ with multiplicity 3, and the last topology has 3 connected component so it contributes $\lambda^3$. We then recover the $n=3$ case of Equation \eqref{eq:Poisson_moments}.}
    \label{fig:z3mm}
\end{figure}
Notably, Marolf--Maxfield theory is not factorizing, and we do not have $\braket{Z^n} = \braket{Z}^n$. This is unsurprising, as Marolf--Maxfield theory contains spacetime wormholes that contribute nonzero amplitudes to the multi-boundary correlation functions $\braket{Z^n}$, with no stringy wormholes to possibly cancel against. The interpretation given in \cite{MaMax} is that \eqref{eq:MaMaxAct} does not define a single quantum theory of gravity, but instead an ensemble of theories, where $Z$ defines a Poisson random variable of mean $\lambda$, whose moments are given by \eqref{eq:Poisson_moments}.

In this paper, we pursue a complementary interpretation motivated by perturbative renormalization, in which we view \eqref{eq:MaMaxAct} as defining only a low energy gravitational EFT. This EFT is good for answering some questions, but breaks down when pushed to answer more detailed questions like the factorization of multi-boundary correlation functions. We assume that this EFT correctly computes the single boundary partition function $\braket{Z} = \lambda$,\footnote{We do not require $\lambda \in \mathbb{N}$. We will comment on this further in Sections \ref{sec:allorders} and \ref{sec:categ}.} which can be interpreted as the partition function of a holographically dual topological quantum mechanics. However, we will modify the calculation of multi-boundary correlation functions by adding additional \textit{counter-wormholes} which parameterize the unknown stringy wormholes needed to guarantee factorization.

Note that the single boundary partition function $\braket{Z}$ already includes contributions from wormholes in the form of higher-genus surfaces in the sum \eqref{eq:lambda}, which can be understood as a renormalization of the parameter $e^{2 S_0}$ to $\lambda = e^{2 S_0} (1 + \mathcal{O}(e^{-2 S_0}))$. In a more complete treatment, we should also modify the calculation of the single-boundary partition function by counter-wormholes, leading to the simultaneous consideration of the renormalization of $e^{2 S_0}$ and the breakdown of factorization. We will ignore this issue in our treatment to simplify the algebra, taking \eqref{eq:MaMaxAct} to be the definition of Marolf--Maxfield theory and considering only genus zero surfaces and their associated counter-wormholes. See \cite{Blommaert:2021fob} for a complete treatment of wormholes and higher-genus surfaces at the same time.

\subsection{Renormalizing Marolf--Maxfield}\label{sec:RENORM_MM}

Now that the stage is set for Marolf--Maxfield theory, we would like to use the intuition gained in Section \ref{sec:PERT_RENORM} to introduce and motivate an analogy between the Factorization Paradox and perturbative renormalization.

The first step in making our analogy is to identify the gravitational analogs of tree-level and loop-level Feynman diagrams in the Marolf--Maxfield theory \eqref{eq:MaMaxAct}. As stated above, we assume that \eqref{eq:MaMaxAct} correctly computes the single boundary partition function $\braket{Z} = \lambda$, arising from the bulk manifold $\Sigma$ given by a single disk. In any correlation function $\braket{Z^n}$, we always have a term $\lambda^n$ arising from $n$ disks, and we may rewrite \eqref{eq:MaMaxAct} as
\begin{equation}\label{eq:grav_tree_and_loops}
    \braket{Z^n} = \lambda^n + \sum_{\Sigma \in \mathcal{W}_n^{\rm wh}} \lambda^{k_\Sigma},
\end{equation}
where $\mathcal{W}_n^{\rm wh}$ is the subset of $\mathcal{W}_n$ such that at least one component of $\Sigma$ is a wormhole, i.e., has more than one boundary. We view the first term in \eqref{eq:grav_tree_and_loops} as ``tree-level,'' requiring no renormalization, and the rest as ``loop-level,'' requiring renormalization due to their failure to factorize. Thus, the analog of loop-level Feynman diagrams are spacetimes that include wormholes.

In the case of perturbative renormalization, we saw that the physically meaningful observables can be efficiently summarized via effective coupling constants, computed perturbatively from the tree-level couplings via \eqref{eq:S_eff}. Analogously, we view \eqref{eq:grav_tree_and_loops} as defining an effective bulk action functional arising from integrating out wormholes. In particular, the observables of Marolf--Maxfield theory can reproduced from a ``tree-level'' calculation involving only spacetimes without wormholes, provided we assign the value $B_{k_\Sigma}(\lambda)$ to a spacetime consisting of $k_{\Sigma}$ disconnected disks rather than merely $\lambda^{k_\Sigma}$. In the case of perturbative renormalization, the effective coupling constants differ from the bare coupling constants only by terms subleading in $\hbar$, the loop-counting parameter. Here, we see that the role of $\hbar$ is being played by $\lambda^{-1}$, as $B_{k_\Sigma}(\lambda)$ agrees with $\lambda^{k_\Sigma}$ up to terms subleading in $\lambda^{-1}$ arising from spacetime wormholes.

In perturbative renormalization, we know that if we naively take the microscopic coupling constants defining our QFT to be finite numbers, then the physically observable effective couplings constants will suffer UV divergences from loops. Here, we see a direct analog: if we take the bulk action functional $e^{- S(\Sigma)} = \lambda^{k_\Sigma}$ to factorize on disconnected bulk spacetimes, then the physically observable multi-boundary correlation functions $\braket{Z^n} = B_n(\lambda)$, analogous to the effective coupling constants, will fail to factorize due to wormholes.

Motivated by \cite{Blommaert:2021fob}, we claim that factorization can be restored if we modify the bulk action functional $e^{-S(\Sigma)} = \lambda^{k_\Sigma}$ of Marolf--Maxfield theory, allowing it to become a non-local functional of $\Sigma$ which we denote $e^{- S_{\rm bare}(\Sigma)}$. This is the analog of passing to a bare action in perturbative renormalization, where we allow coupling constants to be divergent. Schematically, we expect this ``bare bulk action'' to take the form
\begin{equation}\label{eq:non_local_corrections}
    e^{-S_{\rm bare}(\Sigma)} = \lambda^{k_\Sigma} + \text{Non-local corrections}.
\end{equation}
These non-local corrections should be precisely tuned such that the modified gravitational path integral
\begin{equation}\label{eq:modified_grav_path_integral}
    \braket{Z^n} = e^{- S_{\rm bare}(\text{$n$ disks})} + \sum_{\Sigma \in \mathcal{W}_n^{\rm wh}} e^{-S_{\rm bare}(\Sigma)},
\end{equation}
produces factorizing answers, i.e., $\braket{Z^n} = \lambda^n$, due to cancellations between non-localities in the first, ``tree-level'' term, and non-factorizations arising from the sum over wormhole contributions. As in the unmodified Marolf--Maxfield theory \eqref{eq:MaMaxAct}, we make the simplifying assumption that $e^{- S_{\rm bare}(\Sigma)}$ is independent of the number of boundaries of $\Sigma$ and depends only on the number $k_\Sigma$ of connected components.\footnote{This assumption is merely for convenience: relaxing it would not change anything of conceptual consequence but would require keeping track of much more data.}

As described in the Introduction, we view the non-local corrections in \eqref{eq:non_local_corrections} as arising from having already integrated out microscopic, stringy wormholes in a UV complete theory, whose role is to cancel against the geometric wormholes described by the EFT. With this expectation in mind, we will make the ansatz
\begin{equation}\label{eq:counterwormhole_ansatz}
    e^{-S_{\rm bare}(\text{$n$ disks})} = \lambda^n + \sum_{\Sigma \in \mathcal{W}_n^{\rm wh}} \lambda^{k_\Sigma} C(\Sigma),
\end{equation}
where $C(\Sigma)$ is some functional of the set of spacetime manifolds, defined to be multiplicative on connected components.\footnote{Note that $e^{-S_{\rm bare}(\Sigma)}$ for more general $\Sigma$ is defined to be equal to $e^{-S_{\rm bare}(\text{$k_\Sigma$ disks})}$, by our assumption that $e^{- S_{\rm bare}(\Sigma)}$ depends only on the number of components $k_\Sigma$ of $\Sigma$.} Each term $\lambda^{k_\Sigma} C(\Sigma)$ in \eqref{eq:counterwormhole_ansatz} represents the contribution of a \textit{counter-wormhole} (see Figure \ref{fig:counterwh}) associated to the wormhole $\Sigma$. The factors $C(\Sigma)$ must be chosen to precisely cancel the non-factorization arising from $\Sigma$ in the gravitational path integral.
\begin{figure}
\centering
\includegraphics[scale=0.2]{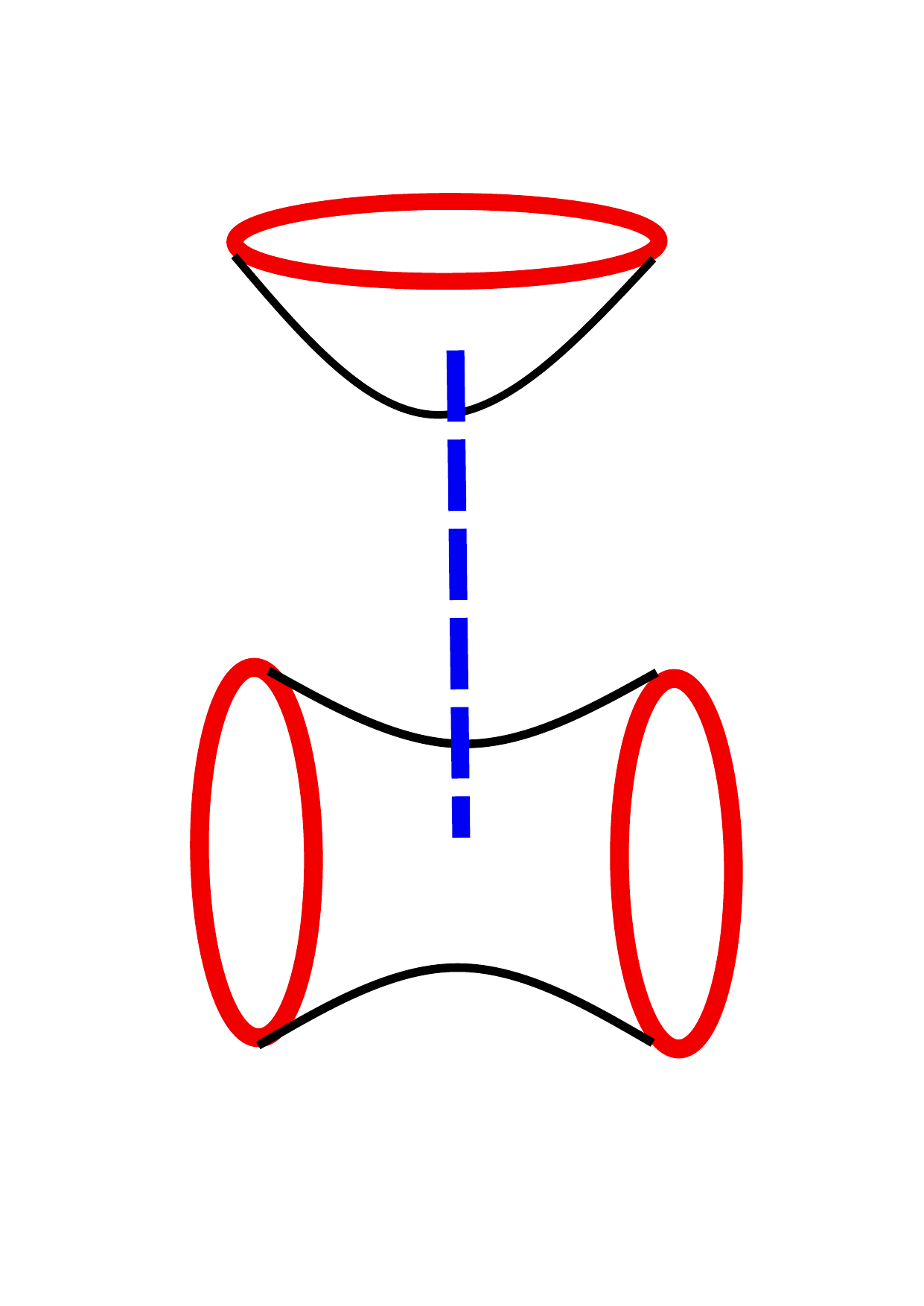}
\caption{A schematic notation for a contribution including a counter-wormhole, represented as a dashed blue line, going between two connected components of spacetime.}
\label{fig:counterwh}
\end{figure}
Crucially, the wormhole contributions in \eqref{eq:modified_grav_path_integral} are computed using the modified action functional $e^{- S_{\rm bare}(\Sigma)}$, rather than the unmodified Marolf--Maxfield action, which themselves include counter-wormhole contributions. Thus, the problem of choosing appropriate counter-wormhole factors $C(\Sigma)$ in order to guarantee factorization is inherently recursive. Now, due to the simplicity of Marolf--Maxfield theory, we could directly solve this recursive problem by hand, in order to obtain $e^{- S_{\rm bare}(\Sigma)}$. However, as our goal is to highlight algebraic structures that may exist in more general gravitational theories, let us instead describe the conceptual root of this recursive subtlety.

In the case of perturbative renormalization, we saw that the BPHZ procedure was needed in order to handle the possibility of divergent subgraphs $\gamma \subset \Gamma$. What are the gravitational analogs of subd-ivergences? Inside a given spacetime manifold $\Sigma$ defining a wormhole contribution to the gravitational path integral, we might have embedded submanifolds $\sigma \subset \Sigma$ which are, themselves, spacetime wormholes (see Figure \ref{fig:subwormhole}).
\begin{figure}
\centering
\includegraphics[scale=0.2]{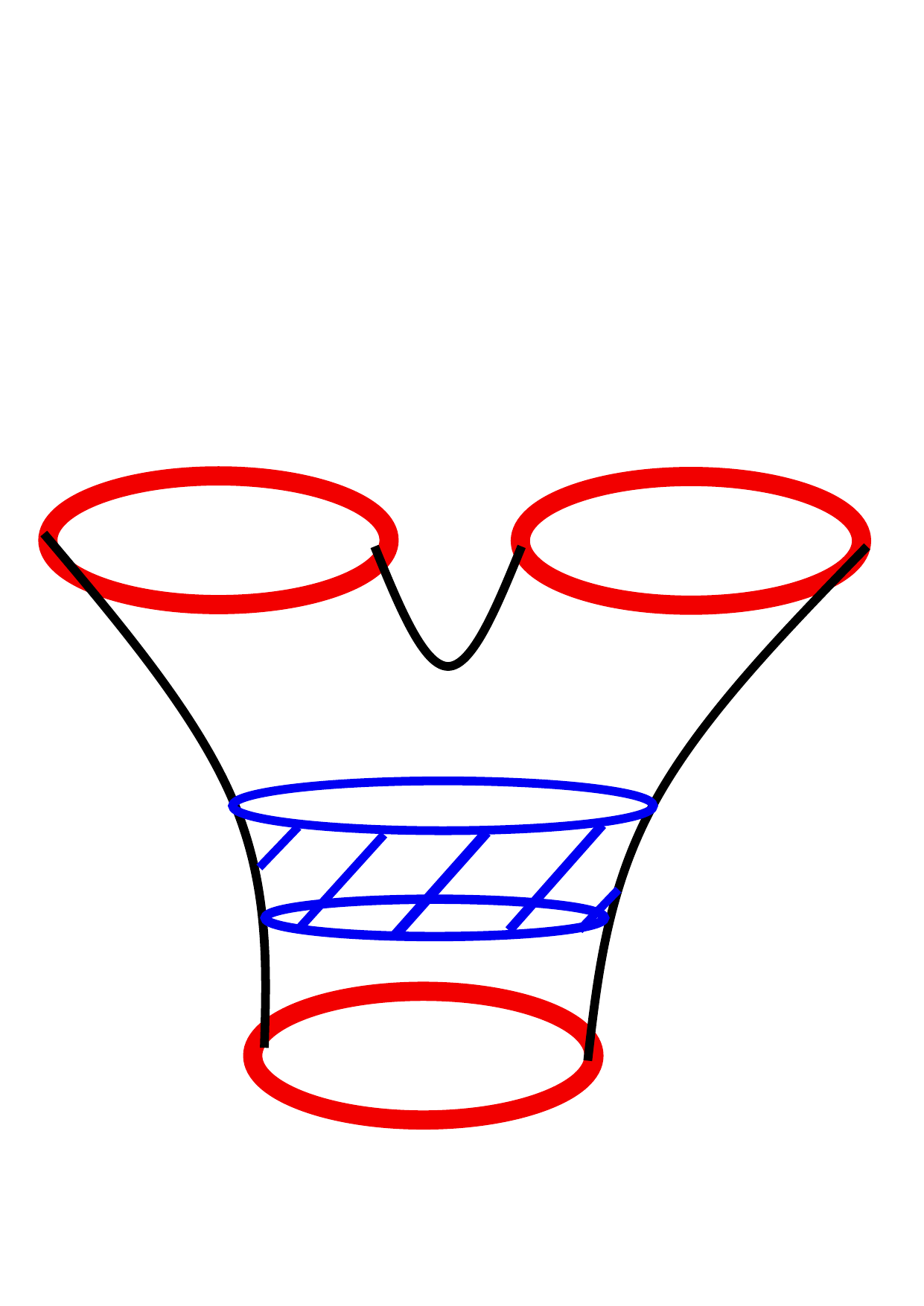}
\caption{An example of sub-wormhole. The pair of pants $\Sigma$ has an embedded cylinder $\sigma\subset \Sigma$ which itself leads to a wormhole contribution.}
\label{fig:subwormhole}
\end{figure}

Presumably, at an earlier stage of the recursive procedure, we would have chosen a counter-wormhole factor $C(\sigma)$ to cancel against the non-factorizations arising from $\sigma$. Thus, not all of the non-factorization in the unmodified contribution of $\Sigma$ to the gravitational path integral is new: some of it must be ascribed to the \textit{sub-wormhole} $\sigma \subset \Sigma$. As a result, $C(\Sigma)$ should not be chosen merely to cancel the contribution of $\Sigma$ to the gravitational path integral, but only the part of its contribution which cannot be ascribed to any sub-wormhole.\footnote{If our theory were not topological, we might want to especially focus on the contribution of metrics on $\Sigma$ where the sub-wormhole $\sigma \subset \Sigma$ is becoming very small. An interesting point of comparison is the definition of $n$-point interaction vertices in closed string field theory: the $n$-point vertex is not merely the $n$-point string amplitude, but involves subtracting off some part corresponding to the sewing of lower-point interaction vertices with propagators arising from a neighborhood of the degeneration limits in moduli space \cite{Zwiebach:1992ie}.}

To handle the possibility of sub-wormholes, we would like to construct a gravitational analog of the BPHZ formula \eqref{eq:renorm_value}. We first need to determine what the analog of the unrenormalized Feynman integrals $U(\Gamma)$ are. These will be given by \textit{unrenormalized wormhole factors} $U(\Sigma)$, defined to be equal to $1$ for all $\Sigma$, so that we may rewrite the Marolf--Maxfield path integral \eqref{eq:grav_tree_and_loops} as
\begin{equation}\label{eq:MM_with_U}
    \braket{Z^n} = \lambda^n + \sum_{\Sigma \in \mathcal{W}_n^{\rm wh}} \lambda^{k_\Sigma} U(\Sigma).
\end{equation}
With the somewhat trivial functional $U(\Sigma)$ defined, we can now define a \textit{prepared wormhole factor} for a connected surface $\Sigma$ via an analog of \eqref{eq:prepared_value}
\begin{equation}\label{eq:grav_prepared_value}
\overline{R}(\Sigma) = U(\Sigma) + \sum_{ \sigma\subset\Sigma} C(\sigma) U(\Sigma/\sigma),
\end{equation}
where $\Sigma/\sigma$ denotes the surface obtained by excising $\sigma$ from $\Sigma$ and gluing in disks along the new boundaries (see Figure \ref{fig:excision}). 

\begin{figure}
\centering
\includegraphics[scale=0.5]{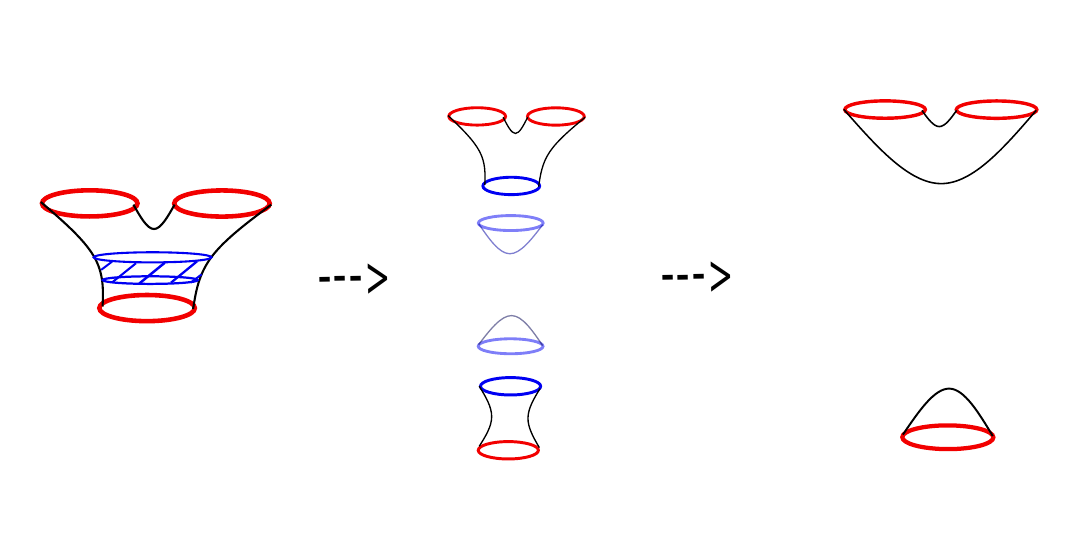}
\caption{The construction of the manifold $\Sigma/\sigma$, when $\Sigma$ is a pair of pants and $\sigma$ an embedded cylinder. We excise the embedded cylinder and fill each hole with a disk, so that we are left with the disjoint union of a disk and a cylinder.}
\label{fig:excision}
\end{figure}

We extend $\overline{R}(\Sigma)$ to disconnected $\Sigma$ multiplicatively. The intuition behind \eqref{eq:grav_prepared_value} is that for each sub-wormhole $\sigma \subset \Sigma$, we subtract off the counter-wormhole factors associated to $\sigma$, in order to quantify the non-factorization arising from $\Sigma$ that cannot be ascribed to a sub-wormhole. A precise definition of the set of (proper, non-empty) sub-wormholes $\sigma \subset \Sigma$ over which we sum in \eqref{eq:grav_prepared_value} is as follows.

\begin{defn}\label{defn:subworm_set}
    Let $\Sigma_n$ denote a connected genus zero surface with $n$ circular boundaries. The set of proper, non-empty sub-wormholes over which we sum in \eqref{eq:grav_prepared_value} is the set of (mapping class group equivalence classes of) connected\footnote{The requirement that the sub-wormholes $\sigma \subset \Sigma_n$ be connected is a bit artificial, and is chosen only to simplify the algebraic structure. As with the case of higher-genus surfaces, see \cite{Blommaert:2021fob} for a treatment that does not make this restriction.} embedded submanifolds $\sigma \subset \Sigma_n$ such that $\sigma \cong \Sigma_k$ for some $2 \leq k < n$ and such that the surface $\Sigma_n/\sigma \cong \Sigma_{n_1} \sqcup \cdots \sqcup \Sigma_{n_k}$ for some partition $n = n_1 + \cdots + n_k$ of the set of boundaries of $\Sigma_n$ into $k$ non-empty parts. Note that a sub-wormhole $\sigma \subset \Sigma_n$ is uniquely specified (up to the action of the mapping class group) by the partition it induces on the set of $n$ boundaries.
\end{defn}

With the prepared wormhole factors $\overline{R}(\Sigma)$ defined, we now define the counter-wormhole factors $C(\Sigma)$ recursively by $C(\Sigma) = - \overline{R}(\Sigma)$ for all wormholes $\Sigma$, so that the \textit{renormalized wormhole factors}
\begin{equation}
R(\Sigma)=C(\Sigma) + \overline{R}(\Sigma)=C(\Sigma)+U(\Sigma)+\sum_{\sigma\subset\Sigma}C(\sigma)U(\Sigma/\sigma)
\label{eq:renormgrav}
\end{equation}
vanish for all wormholes, $R(\Sigma) = 0$. As $\overline{R}(\Sigma_n)$ only involves wormhole factors $C(\Sigma_k)$ for $k < n$, this recursive process terminates. We will refer to this recursive definition of $C(\Sigma)$ as the \textit{gravitational BPHZ procedure}.

Now that we have defined the counter-wormhole factors $C(\Sigma)$ for all wormholes $\Sigma$, we can use them to construct a renormalized, factorizing gravitational path integral in three different ways, directly analogous to the three approaches to defining renormalized perturbation theory described in Section \ref{sec:SUBDIV_BPHZ}:

\textbf{First approach:} The first, most direct approach is to replace the unrenormalized wormhole factors $U(\Sigma)$ with the renormalized ones $R(\Sigma)$ in the gravitational path integral \eqref{eq:MM_with_U}, obtaining
\begin{equation}\label{eq:first_approach_grav}
    \braket{Z^n} = \lambda^n + \sum_{\Sigma \in \mathcal{W}_n^{\rm wh}} \lambda^{k_\Sigma} R(\Sigma).
\end{equation}
Since we have set $R(\Sigma) = 0$ for all wormholes $\Sigma$, this reduces to removing all wormholes by hand in the sum over topologies, and we are left with
\begin{equation}
    \braket{Z^n} = \lambda^n,
\end{equation}
a factorizing result. While somewhat tautological, excluding wormholes by hand does have the benefit of making factorization manifest, just as the first approach to defining renormalized perturbation theory in Section \ref{sec:SUBDIV_BPHZ} makes the cancelation of UV divergences manifest. However, this approach involves dramatically changing the rules of the game by hand, which obscures the meaning of the many useful semiclassical computations that include summing over wormholes.

\textbf{Second approach:} The second approach is to use $C(\Sigma)$ to define a non-local bare action functional $e^{-S_{\rm bare}(\Sigma)}$ as in \eqref{eq:counterwormhole_ansatz}. We then forget where this bare action functional came from, and compute using the unrenormalized wormhole factors $U(\Sigma) = 1$, but now with the contribution of a surface $\Sigma$ being given by $e^{-S_{\rm bare}(\Sigma)}$ rather than $\lambda^{k_{\Sigma}}$, obtaining \eqref{eq:modified_grav_path_integral}, which we can rewrite here as
\begin{equation}\label{eq:second_approach_grav}
    \braket{Z^n} = e^{- S_{\rm bare}(\text{$n$ disks})} + \sum_{\Sigma \in \mathcal{W}_n^{\rm wh}} e^{-S_{\rm bare}(\Sigma)} U(\Sigma).
\end{equation}
While it may not be obvious yet, if the counter-wormhole factors $C(\Sigma)$ are chosen recursively through the gravitational BPHZ procedure, then this second approach will agree with the first, and yield $\braket{Z^n} = \lambda^n$, as desired. In this approach, we perform a nontrivial sum over wormhole contributions, but the non-localities in the bare action functional will conspire to cancel against the wormholes and lead to factorization. While this approach obscures factorization, it makes manifest that we are still performing a sum over the same topologies as in the semiclassical Marolf--Maxfield theory, with non-local rules for the contribution of a spacetime depending on its number of connected components. It is worth pointing out that the non-localities in $e^{-S_{\rm bare}(\Sigma)}$ are all subleading in $\lambda^{-1}$, so we may view these as non-local corrections to the Marolf--Maxfield EFT, suppressed by $\lambda^{-1}$.

\textbf{Third approach:} The third approach consists of enlarging the set of topologies we include in our gravitational path integral, by explicitly including counter-wormhole topologies. In this approach, each wormhole or counter-wormhole contribution will still be individually non-factorizing, but the combinatorics of formula \eqref{eq:renormgrav} will be more readily visible, since we can keep track of the contributions of each counter-wormhole individually.

\begin{figure}
\centering
\includegraphics[scale=0.5]{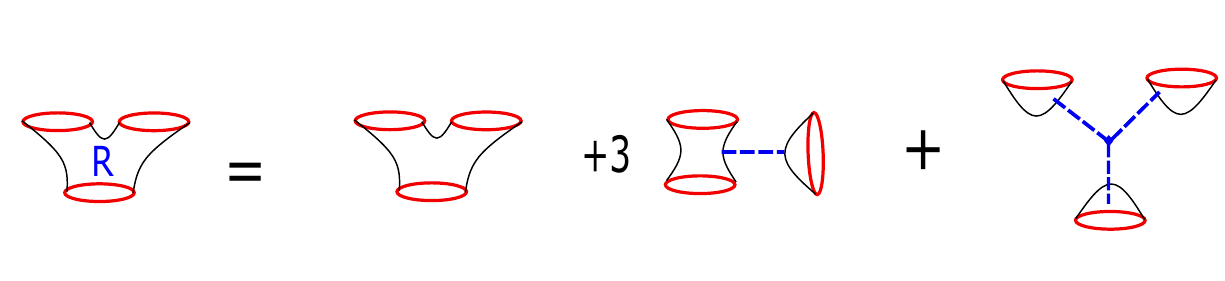}
\caption{A graphical representation of the counter-wormholes contributing to the renormalized value of a pair of pants. In order to restore factorization, we derive the contributions of the counter-wormholes recursively by asking that the renormalized values of all nontrivial wormholes are zero.}
\label{fig:Renormalizedpants}
\end{figure}

\subsection{A few explicit examples}
\label{sec:feworders}
We now show how the gravitational BPHZ procedure implemented by \eqref{eq:renormgrav} implements factorization for the first few correlation functions $\braket{Z^2}, \braket{Z^3}$. In addition, we will show explicitly that the three approaches for incorporating the counter-wormhole factors $C(\Sigma)$ discussed above are equivalent.

Before being able to apply any of these approaches, we first need to compute the factors $C(\Sigma_2), C(\Sigma_3)$ associated to the cylinder $\Sigma_2$ and the pair of pants $\Sigma_3$. For the cylinder $\Sigma_2$, there are two terms in Equation \eqref{eq:renormgrav}, which reads
\begin{equation}
R(\Sigma_2) = C(\Sigma_2) + U(\Sigma_2).
\end{equation}
Since we want to impose $R(\Sigma_2)=0$, and we have that $U(\Sigma_2)=1$, we learn that
\begin{equation}
C(\Sigma_2)=-1.
\end{equation}
In other words, the counter-wormhole associated to the cylinder exactly compensates the contribution of the cylinder, as there are no nontrivial sub-wormholes to consider.

Let us move on to the next order. For the pair of pants $\Sigma_3$, Equation \eqref{eq:renormgrav} reads
\begin{equation}
R(\Sigma_3) = C(\Sigma_3) + U(\Sigma_3) + 3 C(\Sigma_2) U(\Sigma_2 \sqcup \Sigma_1),
\end{equation}
arising from the three topologically distinct sub-wormholes $\Sigma_2 \subset \Sigma_3$ associated to the three partitions of the set $\{1, 2, 3\}$ into two non-empty parts. By imposing $R(\Sigma_3)=0$ and inserting the previously computed value $C(\Sigma_2) = -1$, we obtain:
\begin{equation}
C(\Sigma_3) = - U(\Sigma_3) - 3 C(\Sigma_2) U(\Sigma_2 \sqcup \Sigma_1) = - 1 - 3 (-1) = 2.
\end{equation}
This last computation is illustrated on Figure \ref{fig:Renormalizedpants}.

Let us now implement the three methods described above to these first two steps of the gravitational BPHZ procedure.

\textbf{First approach:} The above computations of the counterterms guarantee that $R(\Sigma_2)=R(\Sigma_3)=0$. Therefore all wormholes become excluded from the gravitational path integral by hand, and we simply calculate \begin{align}\braket{Z^2}=\lambda^2,\quad\quad \braket{Z^3}=\lambda^3,\end{align}which is of course consistent with factorization.

\textbf{Second approach:} We first compute $e^{-S_{\rm bare}(\Sigma)}$ for $k_\Sigma = 2, 3$, using \eqref{eq:counterwormhole_ansatz}. For $k_{\Sigma} = 2$ we obtain:
\begin{equation}
    e^{-S_{\rm bare}(\text{$2$ disks})} = \lambda^2 + \lambda^{k_{\Sigma_2}} C(\Sigma) = \lambda^2 - \lambda,
\end{equation}
and for $k_\Sigma = 3$ we obtain
\begin{equation}
    e^{-S_{\rm bare}(\text{$3$ disks})} = \lambda^3 + 3 \lambda^{k_{\Sigma_2 \sqcup \Sigma_1}} C(\Sigma_2) + \lambda^{k_{\Sigma_3}} C(\Sigma_3) = \lambda^3 - 3 \lambda^2 + 2 \lambda.
\end{equation}
As an explicit illustration, the computation of $e^{-S_{\rm bare}(\text{$3$ disks})}$ is represented on Figure \ref{fig:sbare}.

\begin{figure}
\centering
\includegraphics[scale=0.7]{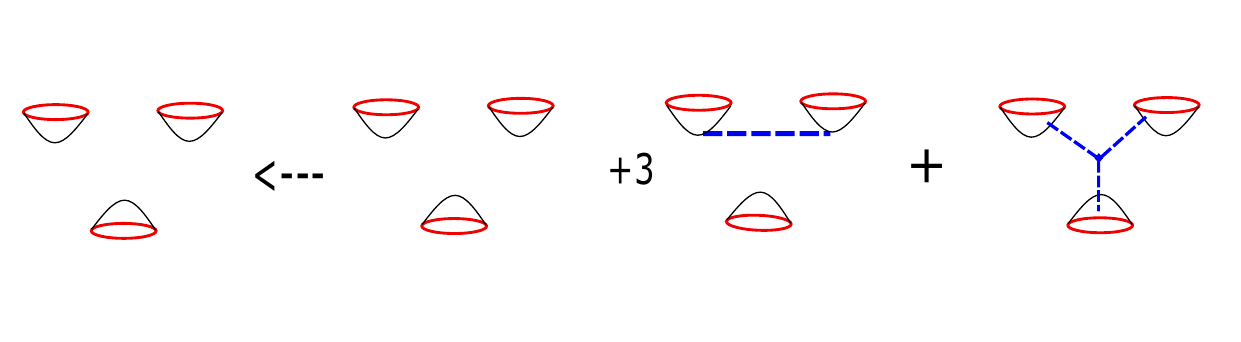}
\caption{A graphical representation of all the counter-wormholes getting resummed into the contribution $e^{-S_{\rm bare}(\text{$3$ disks})}$ of a spacetime with three connected components (here taken to be three disks). After resumming the counter-wormholes, this contribution is no longer $\lambda^3$: the counter-wormholes introduce non-localities.}
\label{fig:sbare}
\end{figure}

Now we can compute the values of $\braket{Z^2}$ and $\braket{Z^3}$ using Equation \eqref{eq:modified_grav_path_integral}:
\begin{align}
\braket{Z^2} = e^{-S_{\rm bare}(\Sigma_1 \sqcup \Sigma_1)} + e^{-S_{\rm bare}(\Sigma_2)} = (\lambda^2-\lambda) + \lambda = \lambda^2,
\label{eq:expanded2}
\end{align}
and
\begin{align}
\braket{Z^3} &= e^{-S_{\rm bare}(\Sigma_1 \sqcup \Sigma_1 \sqcup \Sigma_1)} + 3 e^{-S_{\rm bare}(\Sigma_2 \sqcup \Sigma_1)} + e^{-S_{\rm bare}(\Sigma_3)} \\
&= (\lambda^3 - 3 \lambda^2 + 2 \lambda) + 3 (\lambda^2 - \lambda) + \lambda = \lambda^3,
\label{eq:expanded3}
\end{align}
which explicitly shows that once again, factorization is restored.

\textbf{Third approach:} In this method, we represent each term of the expanded forms of Equations \eqref{eq:expanded2} and \eqref{eq:expanded3} as a topology of its own, including counter-wormholes explicitly, as shown on Figures \ref{fig:z2counterterm} and \ref{fig:Seff}. This corresponds to adding more topologies to the gravitational path integral, which make it explicit that we have included additional wormhole-like contributions to the gravitational path integral.
\begin{figure}
\centering
\includegraphics[scale=0.6]{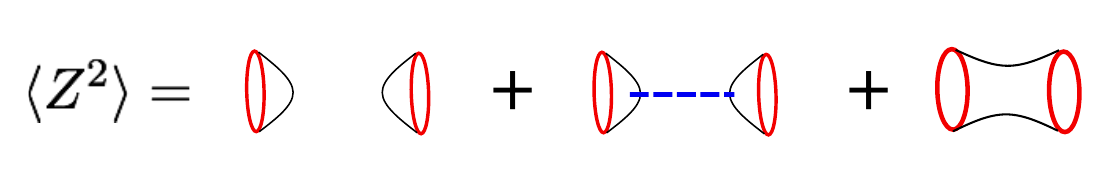}
\caption{The systematic expansion of the gravitational path integral with two boundaries, including the counterterm $C(\Sigma_2)$ represented here with a dashed line.}
\label{fig:z2counterterm}
\end{figure}
\begin{figure}
\centering
\includegraphics[scale=0.7]{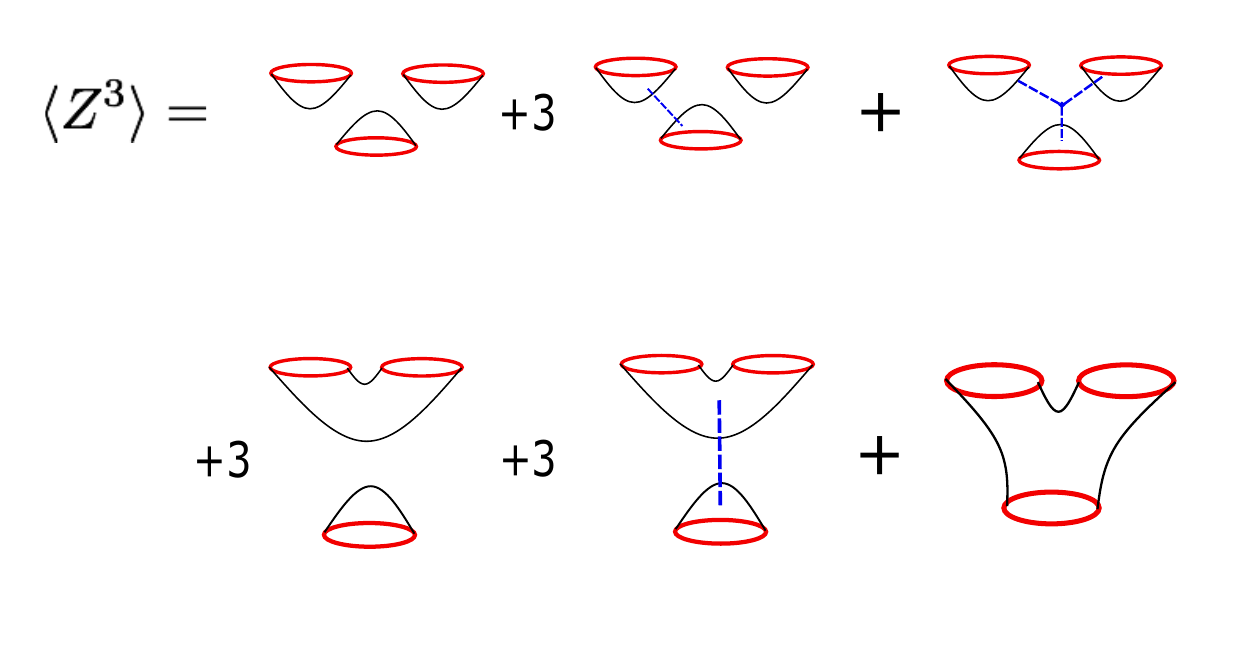}
\caption{The sum of all wormholes and counter-wormholes contributing to the spacetime amplitude $\braket{Z^3}$. The way we constructed counter-wormhole contributions recursively ensures that this sum leads to a factorizing answer.}
\label{fig:Seff}
\end{figure}

\subsection{The Faà di Bruno Hopf Algebra}
\label{sec:FAA_DI_BRUNO}
Now that we have gained some intuition on how the gravitational BPHZ procedure leads to factorization for the first few multi-boundary correlation functions, we are ready to describe the abstract algebraic structure controlling this recursive process. We will define a Hopf algebra $\mathcal{H}_{\rm FdB}$ that underlies the gravitational BPHZ algorithm, analogous to the Connes-Kreimer Hopf algebra $\mathcal{H}_{\rm CK}$ that underlies the BPHZ algorithm in perturbative renormalization. In addition, we will be able to interpret the group of characters $G_{\rm FdB}$ of this Hopf algebra as a group of gauge equivalences or dualities which interpolate between equivalent prescriptions for the gravitational path integral, and in particular, between the first, second, and third approaches described above.

In the case of perturbative renormalization, we saw that many transformations of the form \eqref{eq:CK_action}. In the case of gravity, many of the key equations (such as \eqref{eq:counterwormhole_ansatz}, \eqref{eq:MM_with_U}, \eqref{eq:first_approach_grav}, or \eqref{eq:second_approach_grav}) were of the form 
\begin{equation}\label{eq:FdB_action_grav}
    B_n = A_n + \sum_{\Sigma \in \mathcal{W}_n^{\rm wh}} A_{k_\Sigma} F(\Sigma).,
\end{equation}
where $A_n, B_n$ are sequences of values assigned to spacetimes with $n \geq 1$ components (for instance, $\lambda^n, B_n(\lambda)$, or $e^{-S_{\rm bare}(\text{$n$ disks})}$), and $F(\Sigma)$ some functional on the set of wormholes that is multiplicative on disjoint connected components.

We wish to interpret the transformation \eqref{eq:FdB_action_grav} as the right action of a group $G_{\mathrm{FdB}}$. Functionals $F(\Sigma)$ will define elemnts of $G_{\mathrm{FdB}}$, and we aim to summarize the discussion in Section \ref{sec:RENORM_MM} in terms of a commutative diagram 
\begin{equation}\label{eq:two_approaches_grav}
   \begin{tikzcd}[column sep=huge]
    \lambda^n \arrow[r, "C"] \arrow[dr, "R"'] \arrow[d, "U"'] & e^{-S_{\rm bare}(\text{$n$ disks})} \arrow[d, "U"] \\
    \braket{Z^n} = B_n(\lambda) & \braket{Z^n} = \lambda^n
    \end{tikzcd}
\end{equation}
In this diagram, the vertical arrows $U$ use a sequence $A_n$ to compute an unrenormalized gravitational path integral
\begin{equation}
\braket{Z^n} = A_n + \sum_{\Sigma \in \mathcal{W}_n^{\rm wh}} A_{k_\Sigma},
\end{equation}
the diagonal arrow $R$ computes a tautological ``renormalized gravitational path integral''
\begin{equation}
\braket{Z^n} = A_n + \sum_{\Sigma \in \mathcal{W}_n^{\rm wh}} A_{k_\Sigma} R(\Sigma) = A_n,
\end{equation}
and the horizontal arrow $C$ sums over counter-wormhole contributions according to the ansatz \eqref{eq:counterwormhole_ansatz}. The three approaches to the gravitational BPHZ procedure correspond to different ways of reaching the lower right hand corner, and result in equivalent physical theories.

As in the case of Connes-Kreimer, we can motivate the group law $\star$ on $G_{\rm FdB}$ by observing that we should have $R = C \star U$, and extrapolating from \eqref{eq:renormgrav} to define:
\begin{equation}\label{eq:FdB_group_law}
(F\star G)(\Sigma)=F(\Sigma)+G(\Sigma)+\sum_{\sigma\subset\Sigma}F(\sigma)G(\Sigma/\sigma),
\end{equation}
where $\sigma$ runs over the set of sub-wormholes $\sigma \subset \Sigma$ specified in Definition \ref{defn:subworm_set}. Dually, we can define the Hopf algebra $\mathcal{H}_{\rm FdB}$ dual to $G_{FdB}$, which turns out to be isomorphic to a well-known Hopf algebra called the Faà di Bruno Hopf algebra (as proven below in Proposition \ref{prop:equiv_to_FdB}):

\begin{defn}
    The \textit{Faà di Bruno algebra} $\mathcal{H}_{\rm FdB}$ is the free commutative algebra $\mathbb{C}[\Sigma_2, \Sigma_3, \dots]$ generated by the set of (diffeomorphism classes of) connected genus zero surfaces $\Sigma$ with $n \geq 2$ circular boundaries, equipped with counit $\varepsilon : \mathcal{H}_{\rm FdB} \to \mathbb{C}$ and coproduct $\Delta : \mathcal{H}_{\rm FdB} \to \mathcal{H}_{\rm FdB} \otimes \mathcal{H}_{\rm FdB}$ defined on algebra generators $\Sigma$ by $\varepsilon(\Sigma) = 0$ and
    \begin{equation}
        \Delta(\Sigma) = \Sigma \otimes 1 + 1 \otimes \Sigma + \sum_{\sigma\subset\Sigma} \sigma \otimes \Sigma/\sigma,\label{eq:gravcoprod}
    \end{equation}
    and extended multiplicatively. The sum in \eqref{eq:gravcoprod} runs over all sub-wormholes $\sigma \subset \Sigma$ as specified in Definition \ref{defn:subworm_set} (see Figure \ref{fig:CoprodPants}), and we identify the disjoint union of surfaces with the free algebra structure on $\mathcal{H}_{\rm FdB}$. We further define $G_{\rm FdB}$ to be the spectrum ${\rm Spec}(\mathcal{H}_{\rm FdB})$, consisting of algebra homomorphisms $F : \mathcal{H}_{\mathrm{FdB}} \to \mathbb{C}$. The coproduct $\Delta$ defines a multiplication law $\star : G_{\rm FdB} \times G_{\rm FdB} \to G_{\rm FdB}$ given by \eqref{eq:FdB_group_law}.
\end{defn}

\begin{figure}
\centering
\includegraphics[scale=0.8]{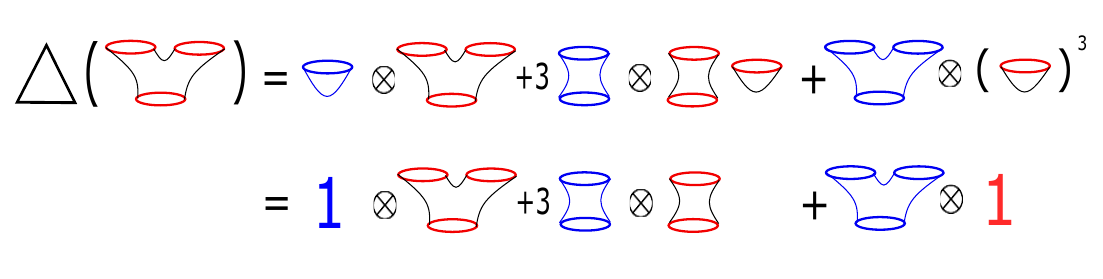}
\caption{The coproduct of a pair of pants wormhole. In the second line we have made it manifest that disks get identified with the unit of the Faà di Bruno algebra.}
\label{fig:CoprodPants}
\end{figure}

\begin{prop}
The coproduct $\Delta$ equips $\mathcal{H}_{\mathrm{FdB}}$ with the structure of a Hopf algebra.
\end{prop}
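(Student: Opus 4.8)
The plan is to mirror the proof of Proposition~\ref{prop:CK_is_Hopf} for the Connes--Kreimer algebra essentially verbatim, since the sub-wormhole/quotient structure of Definition~\ref{defn:subworm_set} plays exactly the role that the subgraph/contraction structure played there. By construction $\varepsilon$ and $\Delta$ are specified on the free generators $\Sigma_n$ and extended multiplicatively, so they are automatically algebra homomorphisms into $\mathbb{C}$ and $\mathcal{H}_{\mathrm{FdB}} \otimes \mathcal{H}_{\mathrm{FdB}}$ respectively; it therefore suffices to verify the counit, coassociativity, and antipode axioms, and by multiplicativity it is enough to check the coalgebra axioms on a single generator $\Sigma$. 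The counit axiom is immediate: applying $(\varepsilon \otimes \mathrm{Id})$ or $(\mathrm{Id} \otimes \varepsilon)$ to \eqref{eq:gravcoprod} and using $\varepsilon(\Sigma) = 0$ on every generator together with $\varepsilon(1) = 1$ kills all terms except $1 \otimes \Sigma$ and $\Sigma \otimes 1$, returning $\Sigma$ in both cases.

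The crux is coassociativity, and I would establish it exactly as in \eqref{eq:CK_coprod2}: show that both $(\Delta \otimes \mathrm{Id})\Delta(\Sigma)$ and $(\mathrm{Id} \otimes \Delta)\Delta(\Sigma)$ equal the single sum
\begin{equation*}
\Delta^2(\Sigma) = \sum_{\sigma \subset \sigma' \subset \Sigma} \sigma \otimes \sigma'/\sigma \otimes \Sigma/\sigma',
\end{equation*}
where the sub-wormholes are now allowed to be empty or improper, with the empty sub-wormhole read as the unit $1$. The geometric input this requires is a nesting compatibility: for $\sigma \subset \sigma' \subset \Sigma$ the excise-and-fill operations satisfy $(\Sigma/\sigma)/(\sigma'/\sigma) \cong \Sigma/\sigma'$, and the sub-wormholes of $\Sigma/\sigma$ are in bijection with sub-wormholes $\sigma'$ of $\Sigma$ containing $\sigma$. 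Using Definition~\ref{defn:subworm_set}, this is cleanest in the partition language: a sub-wormhole of $\Sigma_n$ is the same datum as a set partition of its $n$ boundaries (into at least two and at most $n-1$ blocks), nested sub-wormholes correspond to a partition together with a coarsening, and the identity above is precisely the transitivity of coarsening of partitions. Feeding this bijection into the two iterated coproducts reindexes each as the displayed triple sum, establishing coassociativity and hence the bialgebra structure. I expect this nesting/quotient bookkeeping --- verifying that excision of a sub-wormhole commutes appropriately with excision of a larger one, and that no partitions are double-counted or dropped at the $k=1$ and improper extremes --- to be the main obstacle, exactly as the analogous subgraph identity was the heart of the Connes--Kreimer proof.

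Finally, for the antipode I would observe that $\mathcal{H}_{\mathrm{FdB}}$ is a connected graded bialgebra under the grading $\deg(\Sigma_n) = n-1$, so that the antipode is furnished automatically by Proposition~\ref{prop:conn_graded}, as in the last line of the proof of Proposition~\ref{prop:CK_is_Hopf}. The grading is compatible with $\Delta$ because, for a sub-wormhole $\sigma \cong \Sigma_k$ with $\Sigma/\sigma \cong \Sigma_{n_1} \sqcup \cdots \sqcup \Sigma_{n_k}$ and $n_1 + \cdots + n_k = n$, one has $\deg(\sigma) + \deg(\Sigma/\sigma) = (k-1) + \sum_i (n_i - 1) = (k-1) + (n-k) = n-1 = \deg(\Sigma)$, and it is connected because the generators start at $n = 2$, so the degree-zero part is spanned by the unit $1$.
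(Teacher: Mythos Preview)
Your proposal is correct and follows essentially the same approach as the paper's proof: verify the counit axiom directly, establish coassociativity by rewriting both iterated coproducts as a single sum over nested sub-wormholes $\sigma \subset \sigma' \subset \Sigma$ (with empty and improper cases included), and then invoke the connected grading $\deg(\Sigma_n) = n-1$ together with Proposition~\ref{prop:conn_graded} to obtain the antipode. Your partition-theoretic justification of the nesting/quotient compatibility and your explicit check that the grading is respected by $\Delta$ are a bit more detailed than what the paper records, but the strategy is identical.
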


\begin{proof}
This proof could be circumvented by using the following Proposition \ref{prop:equiv_to_FdB} that $G_{\mathrm{FdB}}$ is isomorphic to a well-known group. Nevertheless, we present a more direct proof along the lines of the proof of Proposition \ref{prop:CK_is_Hopf} because it is valuable for intuition.

By definition, $\varepsilon$ and $\Delta$ are algebra homomorphisms, and it is immediate that $\varepsilon$ is a two sided counit for $\Delta$. For coassociativity, observe that for connected surface $\Sigma$, both $(\Delta \otimes {\rm Id})(\Delta(\Sigma))$ and $({\rm Id} \otimes \Delta)(\Delta(\Sigma))$ can be written as
\begin{equation}
\Delta^2(\Sigma)=\sum_{\sigma\subset\sigma^\prime\subset\Sigma}\sigma\otimes\sigma^\prime/\sigma\otimes\Sigma/\sigma^\prime,
\end{equation}
where the sum runs of nested subwormholes $\sigma \subset \sigma' \subset \Sigma$ and further includes empty and improper subwormholes, where we identify the disk $\Sigma_1$ with the unit $1 \in \mathcal{H}_{\rm FdB}$.

This shows that $\mathcal{H}_{\mathrm{FdB}}$ is a bialgebra. Equipped with the grading $|\Sigma_n| = n - 1$, we see that $\mathcal{H}_{\mathrm{FdB}}$ is connected graded, so it a Hopf algebra by Proposition \ref{prop:conn_graded}, with the antipode defined recursively through the formula
\begin{align}
S(\Sigma)=1-\Sigma-\sum_{\sigma\subset\Sigma}S(\sigma) \Sigma/\sigma.
\label{eq:grav_antipode}
\end{align}
\end{proof}

\begin{cor}
$(G_{\mathrm{FdB}},\star)$ is a group.
\end{cor}
\begin{proof}
    This follows from Proposition \ref{prop:dual_group_to_Hopf_algebra}.
\end{proof}

We now introduce another presentation of $G_{\mathrm{FdB}}$ that makes some of its algebraic structure more manifest, and will greatly simplify the calculations in a lot of cases. Indeed, while the presentations of the Faà di Bruno Hopf algebra and its associated group given above are the most natural ones from the point of view of the gravitational theories at hand, they are isomorphic to objects that may look more familiar to the reader. In particular, the group $G_{\mathrm{FdB}}$ can be identified with the group of formal power series with zero constant term and unit linear term, equipped with a much more familiar law: the law $\circ$ of composition. More precisely, we have:

\begin{prop}\label{prop:equiv_to_FdB}
The map
\begin{align}
\begin{array}{cccc}
    (G_{\mathrm{FdB}},\star) &\longrightarrow  & (t+t^2\mathbb{C}[[t]],\circ)  \\
     F &\longmapsto &\hat{F}(t)=t+\sum_{k\geq 2}F(\Sigma_k) \frac{t^k}{k!},
\end{array}
\end{align}
where $\Sigma_k$ denotes the genus zero topological surface with $k$ boundaries, is an isomorphism of groups.
\end{prop}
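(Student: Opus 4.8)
The plan is to show the stated map is a bijection that intertwines the group law $\star$ with composition of power series, and is therefore an isomorphism. Bijectivity is immediate: an element $F \in G_{\mathrm{FdB}} = \mathrm{Spec}(\mathcal{H}_{\mathrm{FdB}})$ is an algebra homomorphism out of the \emph{free} commutative algebra $\mathbb{C}[\Sigma_2, \Sigma_3, \dots]$, hence is uniquely and freely determined by the arbitrary complex values $F(\Sigma_k)$, $k \geq 2$; these are exactly the data of a series $\hat F(t) = t + \sum_{k\geq 2} F(\Sigma_k) t^k/k!$, so $F \mapsto \hat F$ is a bijection onto $t + t^2\mathbb{C}[[t]]$. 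Since the target is a group under composition (the formal series tangent to the identity) and $(G_{\mathrm{FdB}},\star)$ is a group by the preceding Corollary, it remains only to verify the homomorphism property
\begin{equation*}
\widehat{F \star G} = \hat F \circ \hat G, \qquad (\hat F \circ \hat G)(t) := \hat F(\hat G(t)).
\end{equation*}

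The first step is to rewrite the group law \eqref{eq:FdB_group_law} as a sum over set partitions. By Definition \ref{defn:subworm_set}, a sub-wormhole $\sigma \subset \Sigma_n$ is precisely a set partition $\pi$ of the $n$ boundaries, with $\sigma \cong \Sigma_{|\pi|}$ and $\Sigma_n/\sigma \cong \bigsqcup_{b \in \pi}\Sigma_{|b|}$, where $|\pi|$ is the number of blocks and $|b|$ the size of a block $b$. Writing $f_k = F(\Sigma_k)$, $g_k = G(\Sigma_k)$ and extending by $f_1 = g_1 = 1$ (consistent with identifying the disk $\Sigma_1$ with the unit $1 \in \mathcal{H}_{\mathrm{FdB}}$), the two ``boundary'' terms $F(\Sigma_n)$ and $G(\Sigma_n)$ in \eqref{eq:FdB_group_law} are exactly the extreme partitions: the all-singletons partition gives $\sigma \cong \Sigma_n$ with quotient the unit (contributing $f_n$), while the one-block partition gives the empty sub-wormhole with quotient $\Sigma_n$ (contributing $g_n$). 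Using multiplicativity of $G$ on disjoint unions, $G(\Sigma_n/\sigma) = \prod_{b\in\pi} g_{|b|}$, so
\begin{equation*}
(F \star G)(\Sigma_n) = \sum_{\pi \,\vdash\, [n]} f_{|\pi|} \prod_{b \in \pi} g_{|b|},
\end{equation*}
the sum running over all set partitions of $[n] = \{1,\dots,n\}$.

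The second step is to recognize the right-hand side as the compositional (exponential generating function) form of Fa\`a di Bruno's formula: I would verify that for $\hat F(t) = \sum_{k\geq 1} f_k t^k/k!$ and $\hat G(t) = \sum_{k \geq 1} g_k t^k/k!$ with $f_1 = g_1 = 1$, the coefficient of $t^n/n!$ in $\hat F(\hat G(t))$, denoted $[t^n/n!]\,\hat F(\hat G(t))$, satisfies
\begin{equation*}
[t^n/n!]\,\hat F(\hat G(t)) = \sum_{\pi \,\vdash\, [n]} f_{|\pi|}\prod_{b\in\pi} g_{|b|}.
\end{equation*}
This follows by expanding $\hat F(\hat G(t)) = \sum_k f_k\, \hat G(t)^k/k!$, reading off $[t^n/n!]\,\hat G(t)^k = \sum \binom{n}{j_1,\dots,j_k}\prod_i g_{j_i}$ via the multinomial theorem, and noting that the prefactor $1/k!$ precisely converts the resulting sum over ordered $k$-tuples of nonempty blocks into the sum over unordered set partitions into $k$ blocks (each such partition arising from $k!$ orderings of its blocks). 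Comparing with the previous display yields $(F\star G)(\Sigma_n) = [t^n/n!]\,\hat F(\hat G(t))$ for all $n$, hence $\widehat{F\star G} = \hat F \circ \hat G$, so $F \mapsto \hat F$ is a group isomorphism.

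The main obstacle is bookkeeping rather than conceptual: one must track the symmetry factors $1/k!$ and $1/n!$ correctly and check that the two seemingly special terms $F(\Sigma_n)$, $G(\Sigma_n)$ really are the all-singletons and single-block partitions once the disk is identified with the unit. Conceptually, the content is the classical fact that the set-partition convolution \eqref{eq:FdB_group_law} is dual to composition of exponential generating functions, which is exactly why $\mathcal{H}_{\mathrm{FdB}}$ is the Fa\`a di Bruno Hopf algebra; alternatively one could prove the statement by matching the coproduct \eqref{eq:gravcoprod} directly against the standard Fa\`a di Bruno coproduct.
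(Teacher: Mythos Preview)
Your proof is correct and takes essentially the same approach as the paper: both rewrite $(F\star G)(\Sigma_n)$ as the Fa\`a di Bruno/set-partition convolution and identify it with the $n$th Taylor coefficient of $\hat F\circ\hat G$, then note bijectivity follows because $\mathcal{H}_{\mathrm{FdB}}$ is free on the $\Sigma_k$. The only difference is that you spell out the multinomial verification of Fa\`a di Bruno's formula and the role of the extreme partitions, whereas the paper simply invokes the formula by name.
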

\begin{proof}
Let $F,G\in G_{\mathrm{FdB}}$. We have that\begin{align}\widehat{(F\star G)}^{(n)}(t)=(F\star G)(\Sigma_n) =\sum_{k_1+\dots+k_p=n} F(\Sigma_p) \left(\prod_{i=1}^p G(\Sigma_{k_i})\right)=(\hat{F}\circ \hat{G})^{(n)}(t),\end{align}
by Faà di Bruno's formula for the $n^{th}$ derivative of the composition of two formal power series. Thus, $F \mapsto \hat F$ is a group homomorphism. Moreover it admits an inverse, defined by setting $F(\Sigma_k) = \hat{F}^{(k)}$, so it defines an isomorphism of groups.
\end{proof}

\begin{rem}
Proposition \ref{prop:equiv_to_FdB} justifies the name of the Faà di Bruno Hopf algebra: the combinatorics underlying our gravitational path integral are the same as the ones underlying the composition of formal power series (as is already evident in \cite{MaMax}), which leads to the Faà di Bruno formula for the $n^{th}$ derivative of a composite function. Both cases involve a sum over partitions of $n$, which in our case correspond to all the possible wormhole configurations between $n$ boundaries.
\end{rem}
\begin{rem}
The inverse law of the group $t+t^2\mathbb{C}[[t]]$ under composition is given by the well-known Lagrange inversion formula:
\begin{align}
\hat{F}^{-1}(t)=\sum_{n=1}^\infty\frac{1}{n!}\frac{\mathrm{d}^{n-1}}{\mathrm{d}s^{n-1}}\left(\frac{\hat{F}(s)}{s} \right)^{-n}\bigg\vert_{s=0}t^n.
\end{align}
\end{rem}

To define the action of $G_{\rm FdB}$ on the space of possible gravitational action functionals, we make the following definition.

\begin{defn}
    The space of \textit{bulk action functionals} is the space of formal power series
    \begin{equation}
        A(t) = 1 + \sum_{k > 0} A_k \frac{t^k}{k!},
    \end{equation}
    with constant term $1$. We interpret the power series $A(t)$ as the gravitational action functional that assigns the value $e^{- S(\Sigma)} = A_{k_\Sigma}$ to any spacetime manifold $\Sigma$ with $k_\Sigma$ connected components.
\end{defn}

\begin{exmp}
    The bulk action functional for Marolf--Maxfield theory is
    \begin{equation}\label{eq:power_series_MM}
        A_{\rm MM}(t) = \sum_{k \geq 0} \lambda^k \frac{t^k}{k!} = e^{\lambda t},
    \end{equation}
    while the effective bulk action functional after integrating out wormholes is
    \begin{equation}
        A_{\rm MM}^{\rm eff}(t) = \sum_{k \geq 0} B_k(\lambda) \frac{t^k}{k!} = e^{\lambda (e^t - 1)},
    \end{equation}
    to be compared with \cite[Section 3.2]{MaMax}.
\end{exmp}

\begin{prop}
Equation \eqref{eq:FdB_action_grav} defines a right action of the group $G_{\mathrm{FdB}}$ on the set of bulk action functionals $A(t) = 1 + \sum_{k \geq 1} A_n t^k/k!$.
\end{prop}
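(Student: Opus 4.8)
The plan is to recognize the transformation \eqref{eq:FdB_action_grav} as precomposition of formal power series, and then read off the right-action axioms from associativity of composition together with the group isomorphism of Proposition \ref{prop:equiv_to_FdB}. This keeps the computation short, since the combinatorial content has already been packaged into that isomorphism.

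First I would rewrite the action $A \mapsto A\cdot F$ given by \eqref{eq:FdB_action_grav} purely in terms of power series. A surface $\Sigma \in \mathcal{W}_n$ is specified by a partition of its $n$ boundaries, and $F$ is multiplicative over the connected components of $\Sigma$ with $F(\Sigma_1) = 1$, since the disk $\Sigma_1$ is the unit of $\mathcal{H}_{\rm FdB}$ and characters send the unit to $1$. I can therefore absorb the ``tree-level'' term $A_n$ into the sum by allowing the all-singletons partition (the configuration of $n$ disks, which contributes $A_n\cdot 1$), rewriting \eqref{eq:FdB_action_grav} as a sum over all partitions $\pi$ of $\{1,\dots,n\}$,
\begin{equation*}
(A\cdot F)_n = \sum_{\pi} A_{|\pi|} \prod_{B\in\pi} F(\Sigma_{|B|}),
\end{equation*}
where $|\pi|$ is the number of blocks and $|B|$ the size of a block. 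With the exponential-generating-function conventions $A(t) = 1 + \sum_{k\geq 1} A_k t^k/k!$ and $\hat F(t) = t + \sum_{k\geq 2} F(\Sigma_k) t^k/k!$, this is exactly Faà di Bruno's formula for the $n$-th coefficient of the composite $A(\hat F(t))$. Hence $A\cdot F = A\circ\hat F$. Since $\hat F(0) = 0$, we get $(A\circ\hat F)(0) = A(0) = 1$, so the result again has constant term $1$ and the action is well-defined on the space of bulk action functionals.

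With this identification the axioms are immediate. The identity element of $G_{\rm FdB}$ is the counit $\varepsilon$, for which $\hat\varepsilon(t) = t$, so $A\cdot\varepsilon = A\circ t = A$ acts trivially. For composability, associativity of composition of power series together with Proposition \ref{prop:equiv_to_FdB} gives
\begin{equation*}
(A\cdot F)\cdot G = (A\circ\hat F)\circ\hat G = A\circ(\hat F\circ\hat G) = A\circ\widehat{F\star G} = A\cdot(F\star G),
\end{equation*}
the middle equality being associativity and the third being the identity $\widehat{F\star G} = \hat F\circ\hat G$ established in Proposition \ref{prop:equiv_to_FdB}. This confirms that \eqref{eq:FdB_action_grav} is a right action.

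The main obstacle is the first step: carefully matching the combinatorics of \eqref{eq:FdB_action_grav}---the separation of the $A_n$ term and the multiplicative extension of $F$ over the components encoded by each partition---against the precise statement of Faà di Bruno's formula, with correct bookkeeping of the factorials hidden in the generating-function conventions. Once this matching is clean, the remaining steps follow formally. As an alternative that parallels the proof of Proposition \ref{thm:CK_action}, one could instead verify the axioms directly by reorganizing the double sum over nested sub-wormholes $\sigma\subset\sigma'\subset\Sigma$ into a single sum with shifted coefficients, avoiding any appeal to Faà di Bruno's formula at the cost of a more involved combinatorial manipulation.
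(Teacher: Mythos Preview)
Your proof is correct and follows essentially the same approach as the paper: both identify the action $A\cdot F$ with the precomposition $A\circ\hat F$ via Fa\`a di Bruno's formula and then conclude by noting that precomposition is manifestly a right action (using Proposition~\ref{prop:equiv_to_FdB} for $\widehat{F\star G}=\hat F\circ\hat G$). Your version is simply more explicit about the bookkeeping---absorbing the tree-level term into the partition sum, checking the constant term, and spelling out the identity and associativity axioms---whereas the paper leaves these details to the reader.
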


\begin{proof}
For $F\in G_{\mathrm{FdB}}$, define the right action
\begin{equation}\label{eq:FdB_action_as_power_series}
    (A\cdot F)(t) = A(t) + \sum_{\Sigma \in \mathcal{W}^{\rm wh}} A_{k_\Sigma} F(\Sigma)\frac{t^{n_\Sigma}}{(n_\Sigma)!},
\end{equation}
where the sum runs over the set $\mathcal{W}^{\rm wh} = \coprod_n \mathcal{W}_n^{\rm wh}$ of wormhole spacetimes $\Sigma$ with any number of boundaries $n_\Sigma$. Note that the $n$th component $(A \cdot F)^{(n)}(t)$ of \eqref{eq:FdB_action_as_power_series} is precisely given by \eqref{eq:FdB_action_grav}. Applying Faà di Bruno's formula to \eqref{eq:FdB_action_as_power_series}, we have that
\begin{align}
(A\cdot F)(t)= (A\circ \hat{F})(t),
\end{align}
using the isomorphism constructed in Proposition \ref{prop:equiv_to_FdB}. Acting by precomposition is manifestly a right action.
\end{proof}

\subsection{Factorization at all orders}\label{sec:allorders}

Leveraging the algebraic framework developed in the previous section, we can now compute appropriate counter-wormhole factors $C(\Sigma_n)$ to restore factorization in the Marolf--Maxfield path integral at all orders. In particular, the constraint that $R(\Sigma) = 0$ for all $\Sigma$ ensures that $R = \varepsilon$, and so $R$ is simply the unit of $G_{\rm FdB}$. Thus, the equation $R = C \star U$ tells us that $C$ is simply given by the inverse of $U$ under the group law: we have $C = U \circ S$, where $S$ is the antipode on $\mathcal{H}_{\rm FdB}$.

The easiest way to compute $C$ is through the isomorphic presentation of $G_{\rm FdB}$ as a group of formal power series under composition provided by Proposition \ref{prop:equiv_to_FdB}. Note that the image of $U$ under this isomorphism is the formal power series
\begin{equation}
    \hat{U}(t) = t + \sum_{k \geq 2} \frac{t^k}{k!} = e^t - 1.
\end{equation}
Thus, it is straightforward to compute $\hat{C}(t)$ as the compositional inverse of $\hat{U}(t)$, given by
\begin{equation}
    \hat{C}(t) = \log(1 + t).
\end{equation} By Taylor expanding this expression, we obtain:
\begin{prop}
The counter-wormhole factors for Marolf--Maxfield theory to all orders are given by
\begin{equation}
C(\Sigma_n)=(-1)^{n+1}(n-1)!
\end{equation}
\end{prop}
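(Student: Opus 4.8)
The plan is to exploit the group isomorphism of Proposition~\ref{prop:equiv_to_FdB}, under which the counter-wormhole element $C \in G_{\rm FdB}$ is completely determined by the single power series $\hat{C}(t)$. Since the renormalized factors vanish, $R = \varepsilon$ is the identity of $G_{\rm FdB}$, and the relation $R = C \star U$ forces $C$ to be the $\star$-inverse of $U$. The isomorphism $F \mapsto \hat{F}$ sends the group law $\star$ to composition $\circ$, hence sends inverses to compositional inverses; thus $\hat{C}(t)$ is the compositional inverse of $\hat{U}(t) = e^t - 1$, namely $\hat{C}(t) = \log(1+t)$, exactly as established in the discussion preceding the proposition. (One can trivially confirm this identification via $e^{\log(1+t)} - 1 = t$.)

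First I would recall precisely how the isomorphism encodes the values $C(\Sigma_n)$. By definition of the map,
\begin{equation}
    \hat{C}(t) = t + \sum_{k \geq 2} C(\Sigma_k)\, \frac{t^k}{k!},
\end{equation}
so that for every $n \geq 2$ the value $C(\Sigma_n)$ equals $n!$ times the coefficient of $t^n$ in $\hat{C}(t)$. All that remains is therefore a mechanical extraction of Taylor coefficients from the closed form $\hat{C}(t) = \log(1+t)$, expanded about $t = 0$ as the formal power series
\begin{equation}
    \log(1+t) = \sum_{n \geq 1} \frac{(-1)^{n+1}}{n}\, t^n.
\end{equation}
Matching the coefficient of $t^n$ against the expansion of $\hat{C}$ yields $C(\Sigma_n)/n! = (-1)^{n+1}/n$, and hence
\begin{equation}
    C(\Sigma_n) = n! \cdot \frac{(-1)^{n+1}}{n} = (-1)^{n+1}(n-1)!,
\end{equation}
which is the claimed formula.

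There is essentially no hard step remaining; the genuine work has already been done in reducing the problem to inverting a power series and identifying $\hat{C}(t) = \log(1+t)$. The only point demanding care is the factorial normalization built into the isomorphism of Proposition~\ref{prop:equiv_to_FdB}: one must remember that the coefficient of $t^k$ in $\hat{C}$ is $C(\Sigma_k)/k!$ rather than $C(\Sigma_k)$ itself, so that the factor of $n!$ precisely converts the $1/n$ coming from the logarithm into $(n-1)!$. As a sanity check, I would compare against the explicit low-order computations of Section~\ref{sec:feworders}: the formula gives $C(\Sigma_2) = -1$ and $C(\Sigma_3) = 2$, in agreement with the values found there by hand, which confirms both the sign convention and the normalization.
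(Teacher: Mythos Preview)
Your proposal is correct and follows essentially the same route as the paper: both identify $C$ as the $\star$-inverse of $U$ via $R = \varepsilon$, pass through the isomorphism of Proposition~\ref{prop:equiv_to_FdB} to obtain $\hat{C}(t) = \log(1+t)$ as the compositional inverse of $\hat{U}(t) = e^t - 1$, and then read off the coefficients. Your added sanity check against the low-order values $C(\Sigma_2) = -1$ and $C(\Sigma_3) = 2$ is a nice touch not present in the paper's version.
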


We can now recast the three approaches to defining a renormalized gravitational path integral in terms of the identity
\begin{equation}
    e^{\lambda t} = (A_{\rm MM} \cdot C \cdot U)(t),\label{eq:gauge}
\end{equation}
where the left hand side is the exponential generating function of the factorizing multi-boundary correlation functions $\braket{Z^n} = \lambda^n$. More precisely, the three approaches to incorporating counter-wormholes differ in the order in which we perform the operations and the subsequent simplifications performed.

\textbf{First approach:} First perform the group multiplication $R=C\star U$, so that we obtain \begin{align}e^{\lambda t} = (A_{\rm MM}\cdot R)(t),\end{align} where $\hat{R}(t)=t$. This amounts to giving a zero contribution to all spacetimes with wormholes inside the gravitational path integral, and simply reading off spacetime amplitudes by evaluating the Marolf--Maxfield action on only disconnected spacetimes.

\textbf{Second approach:} First act with $C$ to define
\begin{equation}
    A_{\rm MM}^{\rm bare}(t) = (A_{\rm MM}\cdot C)(t) = 1 + \sum_{k \geq 1} e^{- S_{\rm bare}(\text{$k$ disks})} \frac{t^k}{k!}.
\end{equation}
to define a bare action, leaving us with
\begin{equation}
    e^{\lambda t} = (A_{\rm MM}^{\rm bare} \cdot U)(t).
\end{equation}
This approach corresponds to resumming the counter-wormholes into non-local spacetime actions $e^{- S_{\rm bare}(\Sigma)}$ for the contribution of a spacetime $\Sigma$ with multiple connected components, and then doing an ordinary sum over topologies with this new action functional. In particular, we can easily compute
\begin{equation}\label{eq:bareMM}
    A_{\rm MM}^{\rm bare}(t) = (A_{\rm MM} \circ \hat{C})(t)=(1+t)^\lambda.
\end{equation}

Expanding in powers of $t$ and extracting coefficients, we deduce that a factorizing bare action functional for Marolf--Maxfield theory is given by
\begin{equation}\label{eq:MM_bare_action}
    e^{- S_{\rm bare}(\Sigma)} = \lambda(\lambda-1)\dots(\lambda-k_\Sigma+1).
\end{equation}
As described in \cite{MaMax}, something special happens when $\lambda \in\mathbb{N}$ : the Taylor expansion of \eqref{eq:bareMM} terminates, and $e^{-S_{\mathrm{bare}}(\Sigma_n)}=0$ whenever $n>\lambda$. This means that the sum over counter-wormholes for Marolf--Maxfield theory for $\lambda \in \mathbb{N}$ enforces the non-local constraint that only topologies with $\lambda$ connected components or fewer may contribute! The difference between this description and the one given in \cite{MaMax} for an $\alpha$-sector with $\alpha = \lambda \in \mathbb{N}$ is that they fix the number of connected components to be exactly $\alpha$, while we have not allowed for closed universes so can only set an upper bound on the number of connected components. In our presentation, there is nothing to stop us from taking $\lambda \notin \mathbb{N}$, as we have not imposed reflection positivity or the existence of a well defined Hilbert space.

The formula \eqref{eq:MM_bare_action} also has the interpretation as a gravitational ``exclusion rule" as described in \cite{Saad:2021rcu,Saad:2021uzi}. To see this, imagine that in the first approach (where wormholes are excluded by hand), the path integral counts $\lambda$ many microstates of the bulk quantum gravity theory on each disk (which could be interpreted as states of half-wormholes or end-of-the-world branes). Then, to reintroduce wormholes, we take any configuration where multiple connected components are in the same microstate, and glue them into a connected topology (this is the ``diagonal = wormhole'' principle of \cite{Saad:2021rcu,Saad:2021uzi}). Thus, if one takes this second approach, and sums both disconnected topologies and wormholes in the gravitational path integral, states where distinct connected components lie in the same microstate need to be subtracted by hand in order to avoid double counting, leading to the formula \eqref{eq:MM_bare_action}.

\textbf{Third approach:} Perform none of the compositions in \eqref{eq:gauge}, and merely expand out all three terms $A_{\rm MM}, C,$ and $U$. This amounts to adding more topologies into the gravitational path integral, which corresponds to all the counter-wormholes being explicitly spelled out.

\begin{rem}
The identity \eqref{eq:gauge} makes it very clear that the Faà di Bruno group $G_{\rm FdB}$ has an interpretation as a group of gauge transformations (or dualities) involving topology change inside the gravitational path integral. Indeed, we can choose arbitrary elements $\theta_1,\theta_2\in G_{\rm FdB}$ and insert a product $1 = \theta_i^{-1} \theta_i$ in between two operations of \eqref{eq:gauge}. We obtain:
\begin{align}\label{eq:grav_gauge_redundancy}
e^{\lambda t} = (A_{\rm MM}\cdot \theta_1^{-1}\cdot \theta_1 \cdot C\cdot\theta_2^{-1}\cdot \theta_2\cdot U)(t).
\end{align}
The interpretation of $\theta_1$ is that it trades counter-wormhole contributions for non-localities in the rules for sums over topologies. This can be understood as an instance of the phenemenon described by Coleman, Giddings, and Strominger \cite{Col,GS1,GS2}, whereby we integrate out wormholes into non-local effects. The difference is that here, we are integrating out not the geometric wormholes visible in the EFT, but instead integrating out microscopic, stringy wormholes as parametrized by the counter-wormholes. In contrast, the interpretation of $\theta_2$ is that it implements the cancelation of counter-wormholes against geometric wormholes, which can be viewed as an example of ER = EPR for spacetime wormholes \cite{Maldacena:2013xja, JMkitp}. In general, the gauge transformations \eqref{eq:grav_gauge_redundancy} allow us to interpolate between infinitely many equivalent ways of organizing the gravitational path integral, including but not limited to the three approaches described above. For example, choosing $\theta_1=C^{-1}$ allows us to switch between the second and third approaches.
\end{rem}

\section{Discussion}
\label{sec:DISC}

Despite the (extreme) simplicity of the toy model considered here, we believe that the structures outlined in this paper are likely to exist in some form in much more general gravitational path integrals. In this last section, we mention several natural possible extensions of our work, and reflect on the lessons that might be drawn from our analogy between the gravitational path integral and perturbative renormalization.

\subsection{More complicated toy models}
\label{sec:ext}

This work focused on the simplest possible example that allowed us to make a precise analogy: a topological theory of gravity on surfaces of genus zero with nonempty boundary, even simpler than the model of \cite{MaMax}. In this context, the Hopf algebra underlying the gauge redundancies of the sum over topologies is the Faà di Bruno Hopf algebra. It would obviously be interesting to consider more complicated sums over topologies. In particular, finding a similar algebraic structure to incorporate genus insertions and end of the world branes seems to be a natural extension, in order to formalize the full analysis of \cite{Blommaert:2021fob}. In higher dimension, an interesting case study could be to try to restore factorization in the tensor models of \cite{Belin:2023efa,Jafferis:2024jkb}.

The mathematics of more complicated algebraic objects associated to topology change is likely to also contain some extra features that are absent in our model. For example, our gravitational path integrals only contain finitely many terms. In renormalization, the sums over Feynman diagrams are usually infinite due to the possibility of nesting arbitrarily many loops, and this is one of the reasons why the Connes--Kreimer Hopf algebra has a richer structure than the Faà di Bruno Hopf algebra. Such infinite sums are usually handled through combinatorial Dyson--Schwinger equations (see for example \cite{Yea}), which have a rich interplay with the Hochschild cohomology of the algebra. We expect such structures to play a key role in more complete gravitational path integrals.

In our simple model, the renormalized wormhole factor $R(\Sigma)$ was given by the counit of the Hopf algebra, which sets $R(\Sigma) = 0$ for all wormholes $\Sigma$, so that the first approach to defining a renormalized path integral was to simply exclude all wormholes. In perturbative renormalization, we only aim to subtract off the divergent part of loop integrals, leaving behind a finite piece. We would like to understand whether there is a possibility for an analogous ``factorizing piece'' of the wormhole contribution, so that the first approach is less tautological. Mathematically, the way that the finite piece is extracted from the divergent loop integral is through a Rota--Baxter structure and the associated Birkhoff factorization \cite{Connes:1998qv,CM1}. It would be interesting to see whether a nontrivial Rota--Baxter structure could control the extraction of a factorizing piece of the wormhole contribution in a more complicated toy model.

\subsection{Geometric models and interplay with bulk EFT}

Another obvious generalization would be to go beyond topology and include geometry and dynamical fields in our bulk gravitational theory. In such a context, a crucial possibility is that the introduction of counter-wormholes and ordinary bulk EFT counterterms may not be completely independent. In particular, the effect of including counter-wormholes would likely be to introduce multi-local terms in our bulk effective action, which could be viewed as multi-local counterterms that supplement the ordinary local counterterms included to regulate bulk loop integrals.

In addition, a geometric theory contains a natural hierarchy of wormhole configurations, organized according to their geometric size. In our simple topological toy model, the integration over geometric wormholes was done in a single step. In more realistic, geometric theories, it might be more reasonable to consider integrating out all wormholes smaller than a given scale. In this way, the counter-wormholes at each scale would only be chosen to cancel against the remaining geometric wormholes larger than our cutoff, and so the counter-wormholes factors would be scale-dependent. This would lead to a natural modification of the rules of bulk effective field theory and renormalization group flow.

\subsection{Bootstrapping a UV complete theory}

On a more ambitious note, we might hope to find the gravitational analog of a UV complete QFT, wherein we are allowed to take the cutoff to be arbitrarily high. In a UV complete QFT, as opposed to an EFT, we no longer view the counterterms as placeholders for unknown UV physics, and instead view them as part of the fundamental definition of a complete theory. In the gravitational context, this would mean defining a UV complete, factorizing gravitational path integral by introducing appropriate counter-wormholes at arbitrarily high scales. These counter-wormholes would no longer be viewed as a substitute for the ``stringy'' physics that resolves factorization; they would be that stringy physics.

For one example of this idea, the giant graviton expansion of supersymmetric gauge theories \cite{Gaiotto:2021xce,Lee:2023iil} describes the giant graviton branes required in the bulk theory in order to correctly match the trace relations present at finite $N$. These giant graviton brane contributions have a similar algebraic structure to the counter-wormholes discussed in this paper, and it would be very interesting to this connection more precise.

\subsection{Factorization in higher codimension}
\label{sec:categ}

Another interesting question is that of factorization in higher codimension. The only observable quantities we considered in this paper were multi-boundary partition functions $\braket{Z^n}$, as we only considered manifolds with boundary and no corners. However, a theory of quantum gravity contains more data than just partition functions, which are associated to asymptotic boundary conditions of higher codimension. For example, it should be possible to associate Hilbert spaces to spatial slices of the boundary.\footnote{For a definition of these Hilbert spaces given only a reflection positive partition function, see for example \cite{Colafranceschi:2023txs,Colafranceschi:2023urj}.} The locality of the holographic dual requires not only that partition functions factorize, but that these Hilbert spaces (as well as higher-codimension data) factorize as well.\footnote{See e.g. \cite{Harlow:2018tqv, Colafranceschi:2023txs,Colafranceschi:2023urj, Boruch:2024kvv}. This has been dubbed the ``factorisation'' paradox \cite{Penington:2023dql} (note the spelling), as suggested by Henry Maxfield. We avoid this naming convention, as there are additional factorization paradoxes in higher codimension as well (for instance, the factorization of the category of boundary conditions), and we would quickly run out of letters!} In particular, factorization of Hilbert spaces is the original context in which gauge equivalences like ER = EPR wre originally considered, in order to explain how a spatial wormhole state could fit inside a tensor-factorized Hilbert space \cite{Maldacena:2001kr, Maldacena:2013xja}.

In order to incorporate factorization of Hilbert spaces into our setup, one would need to formulate an analog of the Hopf algebraic approach presented here for a space of topological surfaces with corners.\footnote{The Hilbert spaces associated to manifolds with corners in the toy model we consider were discussed in \cite{MaMax}. The existence of such a Hilbert space is one way to see that the partition function $\braket{Z}$ assigned to a single boundary circle must be a natural number, as it is the dimension of a finite-dimensional vector space.} We expect that the appropriate analog of a Hopf algebra in order to study factorization of Hilbert spaces would be an algebra of a higher-categorical nature, since the data of a Hilbert space can be understood as the categorification of the data of a partition function. It would be interesting to determine the appropriate higher-categorical algebra required to introduce appropriate counter-wormholes in order to guarantee factorization in all codimensions.

The resolution of these higher factorization paradoxes would also likely require the introduction of additional tools from algebraic topology. In particular, the Swampland Cobordism Conjecture \cite{McNamara:2019rup,MV} was originally introduced in order to address these failures of factorization \cite{McNamara:2022xkg}, and suggests that one should always be able to get rid of the sum over topologies in string theory by performing successive surgeries via ER = EPR \cite{JMkitp}. It seems promising to try to use the framework described in this paper to understand what such a procedure could look like.

One quick guess as to the possible answer is that the counter-wormholes needed to guarantee factorization in higher codimension would likely induce non-localities of a more general form, such as integrals over higher dimensional cycles as opposed to merely multi-local contact interactions. The reason is that these counter-wormholes would be needed to cancel against spacetime topologies with handles of higher dimension, which attach along higher dimensional cycles as opposed to a discrete set of points. These non-local contributions could potentially arise from the response of the worldvolume QFTs living on higher-dimensional branes which have been integrated out. These branes can be viewed as providing the microstates of the higher-dimensional handles via geometric transition.

\subsection{A cosmic Galois group for quantum gravity?}

As a final speculation, note that in this paper, note that we studied \textit{one} very simple gravitational theory with wormholes, and showed that the gravitational BPHZ procedure was described by a particular Hopf algebra $\mathcal{H}_{\rm FdB}$, as well as its dual group $G_{\rm FdB}$. In the case of perturbative renormalization, there is one Connes-Kreimer group $G_{\rm CK}$ per quantum field theory. However the story does not end there. Indeed, it was shown in \cite{Connes:2004zi} that there is a \textit{unique} theory-independent group $\mathbb{U}$ which acts on the space of couplings of \textit{all} QFTs and can be seen as a universal group of symmetries for perturbative quantum field theory. This group also allows to systematically implement the renormalization group flow \cite{Connes:2004zi}. The group $\mathbb{U}$ has very interesting number theoretic properties and is related to motivic Galois theory (see \cite{CM1} for a review). In reference to Pierre Cartier's conjecture that the renormalization group is closely related to the absolute Galois group $\mathrm{Gal}(\overline{\mathbb{Q}}/\mathbb{Q})$, the group $\mathbb{U}$ has been referred to as a ``cosmic Galois group."

Thus, it is tantalizing to ask whether a similar picture holds when it comes to the sum over topologies in quantum gravity. We put forward the following conjecture:

\begin{conj}
There exists a gravitational cosmic Galois group that underlies the gauge redundancies associated with topology change in all quantum gravity theories.
\end{conj}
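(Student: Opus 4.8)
The strategy is to transpose the Connes--Marcolli construction of the cosmic Galois group $\mathbb{U}$ \cite{Connes:2004zi} from perturbative QFT to the gravitational setting, replacing the Connes--Kreimer data by the Faà di Bruno data of Section~\ref{sec:FAA_DI_BRUNO} and its conjectural generalizations. Recall that Connes and Marcolli encode a renormalizable theory together with its renormalization as a flat \emph{equisingular} connection on a principal bundle over a punctured disk in the regularization parameter, fibered over the mass scale $\mu$; the category of such connections is neutral Tannakian, and its Tannakian fundamental group is the universal $\mathbb{U}^{*} = \mathbb{G}_m \ltimes \mathbb{U}$, where $\mathbb{U}$ is pro-unipotent with Lie algebra the free graded Lie algebra on one generator in each positive degree. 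The plan is to build the analogous Tannakian category out of gravitational renormalization data and read off its fundamental group as the gravitational cosmic Galois group.

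First I would install a scaling structure on the gravitational side, since the topological toy model of Section~\ref{sec:GRAV_BPHZ} has no scale and hence no renormalization group flow. As anticipated in Section~\ref{sec:ext}, a geometric theory organizes wormholes by their size, and integrating out wormholes below a cutoff provides the analog of the mass scale $\mu$, while the cutoff-removal limit plays the role of the dimensional-regularization parameter $z \to 0$. The counter-wormhole solution $C = U \circ S$ obtained from the antipode on $\mathcal{H}_{\rm FdB}$ (and, in a general theory, on the corresponding Hopf algebra) should then be repackaged as the monodromy and residue data of a flat equisingular $G_{\rm FdB}$-connection, with the scale-dependence of $e^{-S_{\rm bare}(\Sigma)}$ encoding the flow.

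Next I would assemble these connections into a category, with tensor product given by combining independent gravitational theories (the free-product tensor structure on the underlying Hopf algebras), and verify the rigid abelian tensor axioms together with a fiber functor to $\mathbb{C}$-vector spaces. Tannakian reconstruction would then yield a universal affine group scheme acting on the space of all gravitational couplings (the data $\lambda$, $S_0$, brane tensions, and so on) and implementing the scale flow simultaneously for every theory; this is the candidate gravitational cosmic Galois group. For the Faà di Bruno family one expects the universal group to be controlled by the universal combinatorics of composition of power series established in Proposition~\ref{prop:equiv_to_FdB}, and a concrete first check would be to compute its graded Lie algebra and compare with the free Lie algebra appearing in $\mathbb{U}$.

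The hard part will be making ``all quantum gravity theories'' precise enough to support a Tannakian formalism. Three obstacles stand out. (i) One must actually construct geometric models carrying a genuine scale and a well-defined flow, since without this there is no equisingular-connection framework to Tannakianize. (ii) Rigidity and the abelian structure require a notion of morphism between gravitational theories that is presently undefined; identifying the correct morphisms is likely the crux. (iii) As flagged in Section~\ref{sec:categ}, factorization in higher codimension forces the relevant algebra to be higher-categorical, so the ordinary Tannakian reconstruction may have to be replaced by a higher Tannakian duality, and the universal object would be an $(\infty,n)$-group rather than an affine group scheme. A fully satisfactory proof would, in addition, exhibit a motivic origin for this group paralleling the mixed-Tate story behind $\mathbb{U}$, which is the most speculative ingredient of all.
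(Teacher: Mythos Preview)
The statement you are addressing is explicitly labeled a \emph{Conjecture} in the paper, and the paper does not provide a proof. Immediately after stating it, the authors write only that it ``fits very nicely with the conjectural uniqueness of UV complete quantum gravity'' and that they ``look forward to applying our setup to more complex models, and searching for a unified algebraic description of topology change in quantum gravity.'' There is no argument to compare your proposal against; the conjecture is put forward as a speculative goal, not a result.

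Your write-up is therefore not a proof attempt to be checked for correctness, but a proposed research program. As such it is sensible: transposing the Connes--Marcolli equisingular-connection and Tannakian reconstruction machinery from the Connes--Kreimer setting to the Fa\`a di Bruno setting is the natural line of attack, and your identification of the main obstacles (absence of a genuine scale in the topological model, lack of a defined notion of morphism between gravitational theories, and the likely need for higher-categorical Tannakian duality to handle factorization in higher codimension) is accurate and aligns with the caveats the paper itself raises in Sections~\ref{sec:ext} and~\ref{sec:categ}. But you should be aware that none of the steps you outline has been carried out, and you explicitly flag several of them as open or speculative; this is an outline of how one might try to approach the conjecture, not evidence that the conjecture is true.
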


\noindent The existence of a gravitational cosmic Galois group underlying topology change in \textit{all} quantum gravity theories fits very nicely with the conjuectural uniqueness of UV complete quantum gravity \cite{McNamara:2019rup}. We look forward to applying our setup to more complex models, and searching for a unified algebraic description of topology change in quantum gravity.

\section*{Acknowledgements}

We thank Kasia Budzik, Xi Dong, Kurusch Ebrahimi-Fard, Sergio Hernandez-Cuenca, Manki Kim, Henry Maxfield, Hirosi Ooguri, Gary Shiu, Cumrun Vafa, Wayne Weng and Xi Yin for helpful discussions. MM is supported by NSF grant DMS-2104330. JM is supported by the U.S. Department of Energy, Office of Science, Office of High Energy Physics, under Award Number DE-SC0011632.

\begin{appendix}
\section{Basic facts about Hopf algebras}\label{sec:Hopf_algs}

The main mathematical result in this text is that there is a Hopf algebra structure underlying the sum over topologies in the gravitational path integral. In this appendix, we review a few standard facts about Hopf algebras that have been useful in the main text.

\begin{defn} A \textit{bialgebra} $(\mathcal{H},\mu,\Delta,\eta,\varepsilon)$ over a field $k$ is a $k$-algebra $\mathcal{H}$ (with multiplication operation $\mu : \mathcal{H} \otimes \mathcal{H} \to \mathcal{H}$ and unit $\eta : k \to \mathcal{H}$) equipped with additional algebra homomorphisms $\Delta:\mathcal{H} \rightarrow \mathcal{H} \otimes \mathcal{H}$, called the \textit{coproduct}, and $\varepsilon: H\rightarrow k$, called the \textit{counit}, which are coassociative and counital, i.e., such that the following diagrams commute:
\[\begin{tikzcd}
	\mathcal{H} && {\mathcal{H}\otimes \mathcal{H}} \\
	{\mathcal{H}\otimes \mathcal{H}} && {\mathcal{H}\otimes \mathcal{H}\otimes \mathcal{H}} \\
	\mathcal{H} && {\mathcal{H}\otimes \mathcal{H}} \\
	{\mathcal{H}\otimes  \mathcal{H}} && \mathcal{H} \\
	\arrow["\Delta", from=1-1, to=1-3]
	\arrow["\Delta"', from=1-1, to=2-1]
	\arrow["{{\rm Id}\otimes \Delta}", from=1-3, to=2-3]
	\arrow["{\Delta\otimes {\rm Id}}"', from=2-1, to=2-3]
	\arrow["{\rm Id}", from=3-1, to=4-3]
	\arrow["\Delta", from=3-1, to=3-3]
	\arrow["{\varepsilon\otimes {\rm Id}}"', from=4-1, to=4-3]
	\arrow["\Delta"', from=3-1, to=4-1]
	\arrow["{\rm Id}\otimes\varepsilon", from=3-3, to=4-3]
\end{tikzcd}\]
Intuitively, one can think of the coproduct $\Delta$ as a decomposition operation, just as one may think of the product $\mu$ as a composition operation. In the case of Feynman diagrams, the coproduct \eqref{eq:CK_coprod} on the $\mathcal{H}_{\rm CK}$ extracts all possible sub-divergences; in the case of wormholes, the coproduct \eqref{eq:gravcoprod} on $\mathcal{H}_{\rm FdB}$ extracts all possible sub-wormholes.

A \textit{Hopf algebra} $(\mathcal{H},\mu,\Delta,\eta,\varepsilon)$ is a bialgebra equipped with an additional $k$-linear map $S : \mathcal{H} \to \mathcal{H}$ called the \textit{antipode} such that the following diagram commutes:
\[\begin{tikzcd}
	& {\mathcal{H}\otimes \mathcal{H}} && {\mathcal{H}\otimes \mathcal{H}} \\
	\mathcal{H} && k && \mathcal{H} \\
	& {\mathcal{H}\otimes \mathcal{H}} && {\mathcal{H}\otimes \mathcal{H}}
	\arrow["{S\otimes {\rm Id}}", from=1-2, to=1-4]
	\arrow["\mu", from=1-4, to=2-5]
	\arrow["\Delta", from=2-1, to=1-2]
	\arrow["\varepsilon"', from=2-1, to=2-3]
	\arrow["\Delta"', from=2-1, to=3-2]
	\arrow["\eta"', from=2-3, to=2-5]
	\arrow["{{\rm Id}\otimes S}"', from=3-2, to=3-4]
	\arrow["\mu"', from=3-4, to=2-5]
\end{tikzcd}\]
When it exists, the antipode is uniquely determined by the bialgebra structure on $\mathcal{H}$.
\end{defn}

In this paper, all Hopf algebras considered are commutative but not co-commutative. Under the standard duality between commutative rings and their spectra, commutative Hopf algebras are dual to groups:

\begin{prop}\label{prop:dual_group_to_Hopf_algebra}
Let $(\mathcal{H},\mu,\eta)$ be a commutative algebra over a field $k$. Then the structure of a coproduct $\Delta$, a counit $\varepsilon$, and an antipode $S$ making $\mathcal{H}$ into a Hopf algebra endows the spectrum $G = {\rm Spec}(\mathcal{H})$ with a group structure, called the dual group of $\mathcal{H}$. Under this equivalence, the unit $1 \in G$ is defined to be the counit $\varepsilon$, the group law $\star : G \times G \to G$ is defined in terms of the coproduct $\Delta$ by
\begin{equation}
(g_1 \star g_2)(X) = (g_1 \otimes g_2)(\Delta(X)),
\end{equation}
and the inverse map $(\cdot)^{-1} : G \to G$ is defined in terms of the antipode $S$ by
\begin{equation}
g^{-1}(X) = g(S(X)).
\end{equation}
In both of these formulas, $X$ represents a Hopf algebra element $x \in \mathcal{H}$, and group elements $g, g_1, g_2$ are defined to be algebra homomorphisms $\mathcal{H} \to k$.
\end{prop}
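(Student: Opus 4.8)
The plan is to realize $G = {\rm Spec}(\mathcal{H})$ as a subset of the \emph{convolution algebra} on the linear dual $\mathcal{H}^* = {\rm Hom}_k(\mathcal{H}, k)$, and to deduce the group axioms from the Hopf algebra axioms. Writing the coproduct in Sweedler notation as $\Delta(x) = \sum x_{(1)} \otimes x_{(2)}$, one equips $\mathcal{H}^*$ with the convolution product $(\phi \star \psi)(x) = (\phi \otimes \psi)(\Delta(x)) = \sum \phi(x_{(1)}) \psi(x_{(2)})$ (under the identification $k \otimes_k k = k$), which is exactly the operation in the statement restricted to algebra homomorphisms. First I would verify that $(\mathcal{H}^*, \star)$ is an associative, unital $k$-algebra: associativity of $\star$ is the linear dual of coassociativity of $\Delta$, since both $((\phi \star \psi) \star \chi)(x)$ and $(\phi \star (\psi \star \chi))(x)$ equal $\sum \phi(x_{(1)}) \psi(x_{(2)}) \chi(x_{(3)})$ after applying $\Delta^2$, and the counit axiom gives $\varepsilon \star \phi = \phi \star \varepsilon = \phi$, so that $\varepsilon$ is the unit. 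None of this requires restricting to $G$.

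The crux is to show that $G$ is closed under the operations inherited from $\mathcal{H}^*$, i.e. that the structure maps land back in ${\rm Spec}(\mathcal{H})$. For closure under $\star$, take algebra homomorphisms $g_1, g_2 \in G$ and compute, using that $\Delta$ is an algebra map so $\Delta(xy) = \Delta(x)\Delta(y) = \sum x_{(1)} y_{(1)} \otimes x_{(2)} y_{(2)}$,
\[
(g_1 \star g_2)(xy) = \sum g_1(x_{(1)} y_{(1)})\, g_2(x_{(2)} y_{(2)}) = \sum g_1(x_{(1)}) g_2(x_{(2)})\, g_1(y_{(1)}) g_2(y_{(2)}),
\]
where the second equality uses multiplicativity of $g_1, g_2$ together with commutativity of $k$ to reorder the factors; this is precisely $(g_1 \star g_2)(x)\,(g_1 \star g_2)(y)$. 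Unitality $(g_1 \star g_2)(1_\mathcal{H}) = 1_k$ follows from $g_i(1_\mathcal{H}) = 1_k$ and $\Delta(1_\mathcal{H}) = 1_\mathcal{H} \otimes 1_\mathcal{H}$. The counit $\varepsilon$ is itself an algebra homomorphism by the bialgebra axioms, so the unit lies in $G$.

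Finally I would produce inverses via the antipode. For $g \in G$, set $g^{-1} = g \circ S$. Since $\mathcal{H}$ is commutative, the antipode $S$ is an algebra homomorphism, so $g^{-1}$ is a composite of algebra homomorphisms and hence again lies in $G$. To check it inverts $g$, apply the defining antipode axiom $\mu \circ (S \otimes {\rm Id}) \circ \Delta = \eta \circ \varepsilon = \mu \circ ({\rm Id} \otimes S) \circ \Delta$: since $g$ is multiplicative and unital,
\[
(g^{-1} \star g)(x) = \sum g(S(x_{(1)}))\, g(x_{(2)}) = g\!\left(\sum S(x_{(1)}) x_{(2)}\right) = g\big((\eta \circ \varepsilon)(x)\big) = \varepsilon(x)\, g(1_\mathcal{H}) = \varepsilon(x),
\]
and symmetrically $(g \star g^{-1})(x) = \varepsilon(x)$, so $g^{-1}$ is a two-sided inverse for $g$ in $(G, \star)$. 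Assembling the associative unital product with these inverses shows $(G, \star)$ is a group. The only genuinely delicate points are the closure verifications — confirming that $\star$ and $(\cdot)^{-1}$ preserve multiplicativity — where commutativity of the base field (and of $\mathcal{H}$, for the antipode) is essential; the remaining group axioms are formal transcriptions of coassociativity, the counit property, and the antipode property.
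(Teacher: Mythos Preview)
The paper states this proposition without proof, treating it as a standard background fact about commutative Hopf algebras, so there is no argument in the paper to compare against. Your proof is correct and is precisely the standard one: embed $G$ in the convolution algebra $(\mathcal{H}^*,\star)$, read off associativity and unit from coassociativity and the counit axiom, and then check closure of $G$ under $\star$ and under $g \mapsto g \circ S$.

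One small sharpening of your closing remark: commutativity of $\mathcal{H}$ is in fact not strictly needed anywhere. For closure under $\star$ you already observed that only commutativity of $k$ is used. For the inverse, the antipode in \emph{any} Hopf algebra is an anti-algebra map, $S(xy) = S(y)S(x)$, so even without assuming $\mathcal{H}$ commutative one has
\[
(g \circ S)(xy) = g\big(S(y)S(x)\big) = g(S(y))\,g(S(x)) = g(S(x))\,g(S(y)),
\]
the last step again using only that $k$ is commutative. Thus $g \circ S$ is multiplicative regardless. The hypothesis that $\mathcal{H}$ is commutative is really there so that ${\rm Spec}(\mathcal{H})$ has its usual algebro-geometric meaning (and so that $G$ is guaranteed to be large enough to be interesting), not because the group axioms fail otherwise.
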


In the main text, $G_{\mathrm{CK}}$ and $G_{\mathrm{FdB}}$ are the dual groups of the Hopf algebras $\mathcal{H}_{\mathrm{CK}}$ and $\mathcal{H}_{\mathrm{FdB}}$ respectively. In both cases, it is essential that $G_{\mathrm{CK}}$ and $G_{\mathrm{FdB}}$ are \textit{groups}, in which every element is invertible, due to the existence of antipodes on $\mathcal{H}_{\mathrm{CK}}$ and $\mathcal{H}_{\mathrm{FdB}}$. In the case of perturbative renormalization, the antipode relates the divergences to the counterterms, and in the case of gravity the antipode relates the wormholes to the counter-wormholes. In both cases, the antipode can be defined recursively through one or the other version of the BPHZ procedure. Mathematically, such a recursive construction is possible whenever we have a \textit{connected graded} bialgebra:

\begin{prop}[\cite{Takeuchi1971FreeHA}]\label{prop:conn_graded}
Let $\mathcal{H}$ be a connected graded bialgebra over a field $k$: i.e., a bialgebra $\mathcal{H} = \bigoplus_{n \geq 0} \mathcal{H}_n$ such that $\mu, \Delta, \eta$ and $\varepsilon$ respect the grading, and such that $\mathcal{H}_0 = k$. Then $\mathcal{H}$ admits an antipode $S$, so it is a Hopf algebra.
\end{prop}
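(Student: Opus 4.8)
The plan is to recognize the antipode as a convolution inverse and to build it using the grading. Recall that the space $\mathrm{End}_k(\mathcal{H})$ of $k$-linear endomorphisms carries the \emph{convolution product} $f * g = \mu \circ (f \otimes g) \circ \Delta$, whose unit is $\eta \circ \varepsilon$. Unwinding the antipode axiom displayed above, an antipode is precisely a two-sided inverse of the identity map $\mathrm{Id}_{\mathcal{H}}$ under $*$. Thus the entire problem reduces to showing that $\mathrm{Id}_{\mathcal{H}}$ is invertible in this convolution algebra, and I would phrase the whole argument in those terms.

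First I would decompose $\mathrm{Id}_{\mathcal{H}} = \eta\varepsilon + N$, where $N := \mathrm{Id}_{\mathcal{H}} - \eta\varepsilon$. Because $\varepsilon$ respects the grading and $\mathcal{H}_0 = k$, the counit vanishes on $\mathcal{H}_n$ for $n \geq 1$ and restricts to an isomorphism onto $k$ in degree $0$; hence $N$ annihilates $\mathcal{H}_0$ and acts as the identity on each $\mathcal{H}_n$ with $n \geq 1$. The central step is a local nilpotence statement: I claim $N^{*m}$ vanishes on $\mathcal{H}_n$ whenever $m > n$. To see this, write $N^{*m} = \mu^{(m-1)} \circ N^{\otimes m} \circ \Delta^{(m-1)}$ in terms of the iterated coproduct $\Delta^{(m-1)} : \mathcal{H} \to \mathcal{H}^{\otimes m}$. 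Since $\Delta$ preserves the grading, $\Delta^{(m-1)}(x)$ for $x \in \mathcal{H}_n$ is a sum of tensors $x_1 \otimes \cdots \otimes x_m$ with $\deg x_1 + \cdots + \deg x_m = n$. Applying $N$ in each slot kills every factor of degree $0$, so only the terms with all $\deg x_j \geq 1$ survive, and such terms require $m \leq n$, giving the claim.

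With local nilpotence in hand, the geometric series
$$S := \sum_{m \geq 0} (-1)^m N^{*m}$$
is well-defined, since on each homogeneous component $\mathcal{H}_n$ only the finitely many terms with $m \leq n$ contribute. A formal manipulation in the convolution algebra then gives $S * \mathrm{Id}_{\mathcal{H}} = \mathrm{Id}_{\mathcal{H}} * S = \eta\varepsilon$, so $S$ is simultaneously a left and right convolution inverse of $\mathrm{Id}_{\mathcal{H}}$, i.e. an antipode, upgrading the bialgebra to a Hopf algebra. As a sanity check, expanding the equation $S * \mathrm{Id}_{\mathcal{H}} = \eta\varepsilon$ degree by degree reproduces a recursive formula for $S$ of exactly the shape used earlier in the text (compare \eqref{eq:grav_antipode}), with connectedness guaranteeing that each step only references $S$ on strictly lower-degree elements.

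I expect the local nilpotence estimate to be the single load-bearing step: it is precisely where connectedness ($\mathcal{H}_0 = k$) enters, turning the combinatorial impossibility of writing $n$ as a sum of more than $n$ strictly positive integers into convergence of the geometric series. Everything else—the identification of an antipode with a convolution inverse, and the verification that the resulting series inverts $\mathrm{Id}_{\mathcal{H}}$ on both sides—is routine formal computation in $(\mathrm{End}_k(\mathcal{H}), *)$.
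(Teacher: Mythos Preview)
Your argument is correct and is essentially the same approach as the paper's: both exploit the connected grading to build the antipode inductively, and you yourself note that unwinding $S * \mathrm{Id} = \eta\varepsilon$ degree by degree recovers the recursive formula the paper writes down. The only real difference is packaging: the paper simply records the recursion $S(X) = -X - \sum S(X')X''$ coming from $\Delta(X) = X\otimes 1 + 1\otimes X + \sum X'\otimes X''$ with $X',X''$ of strictly lower degree, whereas you frame everything in the convolution algebra and sum the geometric series in $N = \mathrm{Id} - \eta\varepsilon$, which has the mild advantage of making the two-sidedness of the inverse automatic rather than something to be checked separately.
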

\begin{proof}
    From the axioms for $S$, we must have $S(1) = 1$. Now, for $n > 0$, let $X \in \mathcal{H}_n$ be a homogenous element of degree $n$. Noting that $\varepsilon$ is the projection onto $\mathcal{H}_0$ and applying counitality, we must have
    \begin{equation}
        \Delta(X) = X \otimes 1 + 1 \otimes X + \sum X' \otimes X'',
    \end{equation}
    where $X', X''$ are elements of strictly lower degree than $n$. We define $S(X)$ recursively by the formula
    \begin{equation}
        S(X) = - X - \sum S(X') X''.
    \end{equation}
    This formula defines an abstract version of the BPHZ procedure.
\end{proof}


\bibliographystyle{utphys}
\bibliography{ref}

\end{appendix}
\end{document}